\newtheorem{theorem}{Theorem}[section]
\newtheorem{lemma}{Lemma}[section]
\newtheorem{proposition}{Proposition}[section]
\newtheorem{example}{Example}[section]
\newtheorem{definition}{Definition}[section]
\newtheorem{hyp}{Assumption}[section]
\newcommand{\bhyp }{\begin{hyp} \rm }
\newcommand{\ehyp }{\end{hyp}}
\newcommand\I{\mathds{1}}
\newcommand{\lbn}{-1}
\newcommand{\pp}{\psi}
\newcommand{\wh}{\widehat}
\newcommand{\whc}{\widehat{c}}
\newcommand{\whH}{\widehat{H}}
\newcommand{\ff}{{\mathbb F}}
\newcommand{\Q}{{\mathbb Q}}
\newcommand{\bbP}{{\mathbb P}}
\newcommand{\bbR}{{\mathbb R}}
\newcommand{\cF}{\mathcal F}
\newcommand{\CCF}{\mbox{\rm CF}}
\begin{document}
\title{{\Large \bf  Equity Protection Swaps: A New Type of Investment Insurance \\ for Holders of Superannuation Accounts} \vskip 35 pt }

\author{Huansang Xu$\,^{a}$, Ruyi Liu$\,^{a}$ and Marek Rutkowski$\,^{a,b}\,$   \\ \\ \\ \\
$^{a\,}$School of Mathematics and Statistics, University of Sydney \\ Sydney, NSW 2006, Australia \\ \\
$^{b\,}$Faculty of Mathematics and Information Science, Warsaw University of Technology \\ 00-661 Warszawa, Poland \\ }


\date{\vskip 45 pt \today \vskip 35 pt}
\maketitle


\begin{abstract}
We propose to develop a new class of investment insurance products for holders of superannuation accounts in Australia,
which we tentatively call \textit{equity protection swaps} (EPSs). An EPS is a standalone financial derivative, which is reminiscent of a \textit{total return swap} but also shares some features with the variable annuity known as the \textit{registered index-linked annuity} (RILA). The buyer of an EPS obtains a partial protection against losses on a reference portfolio and, in exchange, agrees to share portfolio gains with the insurance provider if the realised return on a reference portfolio is above a predetermined threshold. Formally, a generic EPS consists of protection and fee legs with participation rates agreed upon by the provider and holder. A general fair pricing formula for an EPS is obtained by considering a static hedging strategy based on traded European options. It is argued that to make the contract appealing to holders, the provider should select an appropriate protection and fee rates that make the fair premium at contract's inception equal to zero.
A numerical study based on the Black-Scholes model and empirical tests based on market data for S\&P~500 and S\&P/ASX~200 indices for 2020-2022 demonstrate the benefits of an EPS as an efficient investment insurance tool for superannuation accounts.

\vskip 20 pt
\noindent Keywords: equity protection swap, portfolio insurance, superannuation account, fair premium, static hedging.
\end{abstract}

\newpage

\section{Introduction}    \label{sec1}

Since the 1990s, we observe the global trend of shifting pension plans from defined benefit (DB) schemes to defined contribution (DC) structures. Within a defined contribution type pension scheme, the ultimate retirement benefit hinges upon the accumulated contributions and the effectiveness of their investment over time. Generally speaking, under a DC plan a proportion of employee's earnings is deposited into an investment fund so that the DC account holder may lawfully access the money at their retirement. In contrast to defined benefit pension plans, DC pension funds provide greater control over retirement savings, albeit introducing market risks to participants. Nowadays, a growing number of DC pension account holders are undergoing transition from the accumulation phase toward pension phase. Given the substantial balances at hand, there is a growing interest in managing the market risks and keeping a proportion of the balance on risky assets with a limited potential loss. For concreteness, we will refer to the pension system in Australia but analogous schemes are available in all major economies.

The Superannuation Guarantee (SG) is a comprehensive retirement savings scheme, which was introduced by the Australian Government on 1 July 1992. Under the SG scheme (also known as {\it superannuation} or, simply, {\it super})  employers are required to contribute mandatory sums to individual superannuation accounts of their employees. As of July 2023, the minimum employer's contribution charge is 11\% of each eligible employee’s ordinary time earnings and is scheduled to progressively grow to its maximum level 12\% from 1 July 2027 onwards. Employer's SG contributions are supplemented by personal contributions of super fund members representing around 20\% of total contributions.
Since the introduction of mandatory superannuation scheme, the proportion of Australian population with superannuation account has reached 78\%, which is one of the highest levels of coverage worldwide. According to the recent data provided by ASFA \cite{ASFA} in September 2023, around 17 million Australians are holders of superannuation account and their total superannuation assets amount to around 140\% of GDP in 2022.

In contrast to other pension systems, Australian super funds distinguish themselves by establishing personal accounts for their members with two main types of super funds: {\it defined benefit super funds} with shared investment risks and limited risk-taking capacity (\cite{D2010}), which are now mostly closed to new members, and nowadays prevailing {\it accumulation super funds}, which grow with contributions and idiosyncratic investment returns. We henceforth focus on the accumulation funds since they enable the provision of highly tailored investment options, accommodating diverse individual beliefs and investment preferences. Accumulation account holders have the flexibility to allocate their investments across various asset classes, including equities, cash investments, property, infrastructure, as well as other assets such as hedge funds and commodities, which also means that their risk-taking capacity is virtually unlimited.

As of September 2023, Australians have collectively accumulated 3.56 trillion AUD, out of which 2.47 trillion AUD held in 1368 APRA-regulated super funds and 0.89 trillion AUD in ATO-regulated self-managed super funds; this makes Australia the fourth largest holder of pension fund assets worldwide. In particular, 1.00 trillion AUD (around 40\% of the total) was invested in MySuper, which is a default option offered by large APRA-regulated super funds.  From the perspective of asset classes, 53.3\% of the total was invested in equities, with a breakdown of 21.9\% in Australian listed equities, 26.4\% in international listed equities, and 5.1\% in unlisted equities. Fixed income and cash investments constituted 28.8\% of total investments, distributed as 20.3\% in fixed income securities (bonds) and 8.5\% in cash (bank bills). Property and infrastructure accounted for 15.6\% of the total, while other asset classes, encompassing hedge funds and commodities, represented 2.2\% of the overall investment portfolio. This nuanced investment allocation strategy underscores the unique nature of Australian superannuation scheme, catering to members' varied expectations and preferences.

Over the past 30 years, the average nominal return of  Australian super funds was 7.2\% and the real return versus CPI was 4.5\%. Despite this overall positive trend, two noteworthy instances of negative returns occurred, first negative 12.9\% return in 2008 during the subprime mortgage crisis of 2007-2010 and then negative 3.3\% return in 2022 amidst the post Covid-19 pandemic high inflation crisis and the ensuing market debacle in the first half of 2022 when the S\&P~500 and ASX~200 (more precisely, the S\&P/ASX~200 index) indices have fallen by 20\% and 12\%, respectively. Not surprisingly, superannuation returns exhibit a high correlation with major market indices due to the predominant inclusion of these indices in investment options offered to members.

\begin{figure} [h!]
    \centering
    \includegraphics[width=16cm, height=9cm]{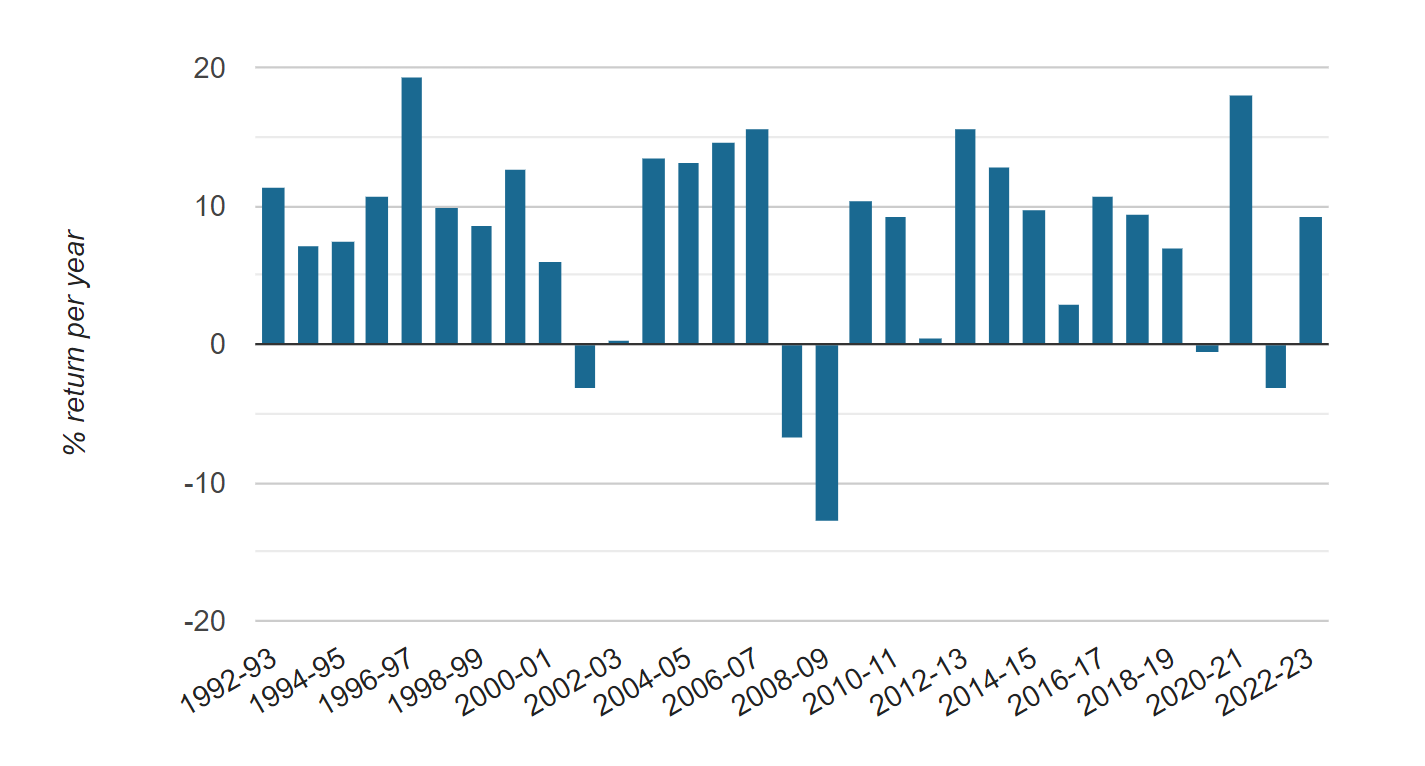}
    \caption{Superannuation nominal returns by year}
    \label{pension_assets}
\end{figure}

One important issue for members of super funds that has not received deserved attention reads: how to manage the risk of a global market downturn? While the account holders often prioritise risk mitigation over pursuing higher returns during the ultimate stage of their accumulation period, they meet a great challenge in finding a desirable investment strategy. This is partially due to the lack of user-friendly financial instruments available to protect their investment returns against an unpredictable risk of a market downturn, especially when in transition to pension phase. This shortage of investment protection products underscores a critical gap in the existing framework for superannuation risk management. We thus believe that it is  important to develop new widely available products serving as an efficient protection against market downturns. By profiting from such standalone
financial instruments, account holders would be able choose more aggressive investment strategies, even during the final stage of accumulation phase
or in pension phase. From a broader financial market perspective, the flow of additional funds from super funds into the market would contribute to sustainable growth of the economy.

Before delving into the Australian superannuation context and introducing a new type of investment insurance, let us first take a brief look at other insurance products offered on pension markets worldwide, especially in the United States where the market for variable annuities has been well established since 1950s. In general terms, a {\it variable annuity} is a financial contract between a policyholder and an insurance company that promises a stream of periodic payments linked to the investment performance of an underlying portfolio.
A typical variable annuity is a unit-linked or managed fund vehicle, which also offers optional guarantee benefits,
commonly referred to as `riders,' as a choice for the customer.
Insurance companies charge periodic fees for managing these annuities and primarily invest in publicly traded securities to create a diversified reference portfolio. They have also responded to the need to protect investors' future income by introducing various additional covenants (riders)
to variable annuities. These riders are designed not only to manage risks, such as market and longevity risks, but also to allow for optimal withdrawal strategies in the face of future uncertainty. Two major types of riders are: the {\it Guaranteed Minimum Death Benefits} (GMDB) and
the {\it Guaranteed Minimum Living Benefits} (GMLB). GMLB riders are further categorised into several classes, including {\it Guaranteed Minimum Income Benefits} (GMIB), {\it Guaranteed Minimum Withdrawal Benefits} (GMWB), and  {\it Guaranteed Lifetime Withdrawal Benefits} (GLWB) but, typically, a single rider combines various features tailored to individual needs. As reported in Drexler et al. \cite{Drexler2017}, there was a significant increase in the sale of variable annuities after the late 1990s when guaranteed minimum benefit riders began to be included in variable annuities and, in 2016, approximately 76\% of policyholders opted to purchase a variable annuity with guarantees, when this option was available.

Regarding the GMDB rider, if the policyholder's death occurs before the contract matures, the beneficiary of the contract will receive the higher of the investment account value or the death benefit. The pricing approach to GMDB riders was first proposed and analysed by Milevsky and Posner \cite{MP2001}. Milevsky and Salisbury \cite{MS2006} examined a variety of static and dynamic methods for pricing GMWB riders. Dai et al. \cite{Dai2008} developed an efficient finite difference algorithm with the penalty approximation to solve the singular stochastic control problem under the optimal withdrawal strategy. Later on, several papers discussed combinations of various riders. In particular, Bauer et al. \cite{Bauer2008} dealt with the valuation of variable annuities with multiple guarantees, with a holder's strategy not only consisting of the decision of surrender option but also possible withdrawals at a set of predetermined dates. Luo and Shevchenko\cite{Luo2015} formulated the pricing problem for GMWDB rider (i.e., GMDB combined with GMWB) as a stochastic control problem by using the conditional expectation of dynamic GMWDB riders for a given death time. Moreover, Yang and Dai \cite{YD2013} and Dai et al. \cite{Dai2015} considered stochastic mortality in GMWB riders and concluded that ignoring mortality risk would lead to an overpricing of the rider. From the existing literature, it is clear that the fair pricing
of variable annuities with riders is a complex optimisation problem, which can be solved by sophisticated providers, but a sound assessment
of their advantages and drawbacks is beyond reach of members of super funds.

As a supplement to compulsory annuities, a new type of insurance product called {\it Registered Index-Linked Annuity} (RILA) has been introduced in the U.S., also known as `index-linked annuities' or `structured variable annuities.'  Moenig \cite{M2021} provided a timely first academic study on RILAs. A RILA is a complex annuity with insurance properties that offer policyholders the flexibility to prioritise their growth opportunities while limiting potential losses. In a generic RILA contract, the policyholder can choose from several index options, protection levels, and other available options to achieve their preferred profile of potential growth. Typically, holders of a RILA can choose to tie their wealth account with the performance of a particular market index, such as the S$\&$P 500 for U.S. large-cap stocks, NASDAQ Composite for all American and foreign common stocks, and the MSCI EAFE for stocks from developed economies outside of North America.
In essence, if the reference index performs poorly, the credited loss is lessened by either a \textit{floor} (i.e., a maximum loss percentage), a \textit{buffer} (i.e., only index losses over a specific threshold are credited), or a \textit{downside participation rate} (e.g., the loss credited to the account corresponds to only 50$\%$ of index losses). As pointed out in \cite{M2021}, the insurer can benefit from the imbalance between the guarantee's downside protection and its upside cap, the ability to invest RILA funds into other products (e.g., corporate bonds) and hence earn a credit risk premium and, in some cases, from using a price-based index not accounting for dividend payments.

It is striking that in today's Australian superannuation system no widely available and functional risk management instruments
similar to RILAs are offered. It should be acknowledged that some kinds of variable annuities with riders are available to members of super funds but they are mostly retirement income products (see, for instance, Shen et al. \cite{SSWZ2023} and the references therein) whereas insurance products for both accumulation and pension phase are missing. Furthermore, as emphasised in \cite{SSWZ2023}, the valuation and assessment of existing products rely on extremely complex numerical techniques and hence their pricing is subject to model risk. This should be contrasted with a robust market-based approach presented here where fair pricing does not require to make use of any stochastic model, provided that suitable options on indices are traded. Most importantly, we propose to completely separate the superannuation account from insurance product, as opposed to the case variable annuities where insurance is embedded in a pension product, for instance, a lifetime pension such a {\it Group Self-Annuity} (GSA), which was proposed by Piggott et al. \cite{PVD2005} and recently reexamined by Shen et al. \cite{SSWZ2023}. Although a proportion of holders of superannuation accounts are likely to invest in variable annuities to support their living expenses in retirement phase, it is also reasonable to expect that not only holders of accumulation accounts but also a group of holders of pensions account may show interest in investment protection products.

In response to the increasing demand for investment insurance by members of super funds, we propose to completely depart from variable annuities with riders and to introduce of a new class of standalone financial products, which we propose to call {\it Equity Protection Swaps} (EPS). Recall that a traditional {\it equity swap} involves the exchange of future cash flows between two parties, allowing each party to diversify their income for a predetermined period while retaining their original assets. An EPS shares some similarities with a common type of equity swaps known as a {\it Total Return Swap} (TRS), which is a financial contract transferring both the credit and market risks of an underlying asset. Due to its design, an EPS can offer effective protection in case the value of a reference portfolio declines, and it can be easily tailored to meet the specific needs of super fund members.

It is important to note that the proposed approach is completely independent of the issue of optimal investment strategies since we take for granted that the majority of pension account holders lack the necessary financial or mathematical expertise to construct an optimal portfolio and hence they are choosing the default option, such as MySuper. Moreover, the current superannuation system in Australia is characterised by a relatively limited selection of investment options offered by super funds, restricting investors from completely freely choosing their desired investments. Given the limited understanding of investment principles among most super funds members, pursuing an optimal investing approach to superannuation products would be clearly unrealistic and of no practical importance. At the same time, there remains a significant, and intuitively clear, desire among investors to mitigate their exposure to investment risk. This underscores the relevance of Equity Protection Swaps, which are tailored to provide an effective and relatively simple way of risk reduction (investment insurance) to holders of superannuation account.

When compared to the existing insurance products for variable annuities, an EPS offer several advantages. Firstly, it has a more transparent structure than a typical variable annuity rider, making it easier for super funds members to assess and appreciate. Second, it does not require from its holder to make any decisions during the contract's lifetime. In particular, there is no need to consider an embedded optimal stopping problem, which is a feature commonly encountered in most riders. Thirdly, the fair pricing of an EPS relies on static perfect hedging using liquidly traded options. Hence it is expected that an EPS would be a straightforward and cost-effective tool for insurance providers and holders of superannuation accounts. This should be contrasted with opaque and extremely complex utility-based pricing methods for riders such as GMWB, GMIB or GSA, typically with no reference to effective hedging for providers, although a martingale measure for equities model is sometimes formally utilised. Most importantly, the fair pricing of a basic EPS can be considered to be model-free since market data for traded European options can be used to compute its fair premium.

Finally, an EPS is not inherently tied to any annuity and hence it can be issued by a third party, which is here referred to as a provider of investment insurance.  In fact, the key distinction between a RILA and an EPS is that the latter is a standalone financial derivative, which can be added to a superannuation account at any phase (not only after retirement), whereas RILA is a genuine variable annuity, which can be purchased by retirees using funds from their pension accounts. In the context of the Australian superannuation system, it should be stressed that members are not allowed to withdraw funds from their superannuation accounts during the accumulation phase to purchase an investment guarantee such as, e.g., a RILA. Instead, members are compelled to select a combination of investment options, which are offered by a super fund, and thus they are entirely responsible for managing market risks. An EPS is tailored to serve as a convenient risk mitigation tool for super funds members by offering a highly desirable and transparently designed protection against potential losses at a minimal cost, which is only incurred when their investment portfolio yields substantial gains.

The paper is organised as follows. We first present in Section \ref{sec2} the definition and main features of a generic \textit{EPS}. We propose a classification of the fee and protection legs of an EPS with special attention to \textit{buffer} and \textit{floor} legs.
Section \ref{sec3} is devoted to an analysis of hedging strategies for equity protection swaps. We show in Proposition \ref{prop5.2.1} that
an EPS admits a static hedge composed of European call and put options and thus its fair price can be computed in
any arbitrage-free model of the financial market. Even more importantly, the fair price of an EPS can also be obtained directly from market data for traded European call and put options if such options are available for the reference portfolio (usually a market index). Then the static hedge for an EPS can be easily constructed by investing in traded options and hence the hedge is independent of a model. If, however,
European options needed for hedging are not available, then their synthetic counterparts can be produced through a dynamic trading strategy using a
preferred dynamical model for the reference portfolio (e.g., the classical Black-Scholes \cite{BS1973} model, the Merton \cite{Merton1976} jump-diffusion model, the Heston \cite{Heston1993} stochastic volatility model, or the Bates \cite{Bates1996} model).

In Section \ref{sec4}, we present forward performance tests of buffer and floor EPSs within the framework of the classical Black-Scholes model using the pricing formula for a generic EPS obtained in Proposition \ref{BSEc}.
In the first part of model-based numerical studies, we use buffer and floor EPSs introduced before as examples of a generic EPS. By specifying interest rates, volatility, and other relevant market parameters, we compute the fair initial premium by using static hedging strategies for an EPS.
We can see the impact of different characteristics of an EPS, such as participation rate, leg setting, and maturity, on fair premiums. It is widely acknowledged that a typical investor would not be willing to pay an upfront premium, especially when it is substantial. Therefore, we propose to focus on EPS products with a null fair premium at inception. An EPS with a null fair premium still has the  participation rate for the protection leg desired by the holder, whereas the participation rate for the fee leg is chosen to ensure that the provider
can implement a hedging strategy based on long and short positions in European options at no additional cost.

Unlike Section \ref{sec4} where all results are model-based, Section \ref{sec5} is devoted to backtesting of real-world benefits for holders of EPSs using historical data for S\&P~500 and S\&P/ASX~200 indices and hence results and conclusions are based on model-free market data. We ponder over two situations during a period under study: a market crash generating substantial losses and a general market characterised by moderate gains. An analysis of the first situation shows that an EPS can provide highly effective protection against large losses, whereas the second situation illustrates the tradeoff between loss protection and profit sharing. The initial premia for EPS products used in our analysis can be obtained directly from the market data for European call and put options, as opposed to calculations based on a hypothetical model. The conclusion from
backtesting is that an EPS can serve as an efficient tool to mitigate financial risks associated with a superannuation account.

Finally, in Section \ref{sec6} we briefly summarise our findings and we give an outline possible directions for future research.
In particular, we emphasise the importance of a cross-currency variant of an equity protection swap for holders of superannuation
accounts in Australia.

\section{Properties and Classification of Equity Protection Swaps} \label{sec2}

In this section, we will build a general pricing model for adjusted return of the underlying reference portfolio for an {\it equity protection swap} (EPS). By an {\it EPS}, we mean a financial contract in which a single terminal payoff occurs at maturity date $T$ and is a predetermined function of the realised simple rate of return on a {\it reference portfolio} during the lifetime of an EPS. For instance, a reference portfolio may represent the value of a particular market index, a combination of domestic and foreign indices or a bespoke portfolio of domestic and foreign equities. 

\subsection{Specification of a Generic Equity Protection Swap} \label{sec2.1}

Let $S_t$ stand for the value at time $t \in [0,T]$ of a given reference portfolio of assets, which is denoted as $S$.
The simple rate of return on $S$ over the time period $[0,T]$ is denoted as $R_S(0,T)$ but, for brevity, we will also write $R_T$ instead of $R_S(0,T)$ so that $R_T=R_S(0,T):=(S_T-S_0)/S_0$  and the inequality $R_T \geq -1$ holds under our standing assumptions that $S_0$ is a strictly positive number and $S_T$ is a nonnegative random variable. Formally, the stochastic process $S$ is defined on a probability space $(\Omega,\cF,\ff,\bbP)$ endowed with the filtration $\ff$ where $\bbP$ is the statistical probability and thus the rate of return $R_T$ is an $\cF_T$-measurable random variable.

Let us consider a generic equity protection swap so that an explicit specification of a reference portfolio $S$ is not important at this stage.
Suppose that a holder of a superannuation account decides at time 0 to invest some of their funds in $S$ and let $N_p$ denote the {\it nominal principal} of their holdings that are invested (either directly or indirectly) in $S$ at time 0. Then the dynamics of their funds that are invested in $S$ will follow the fluctuations of the value of a portfolio $S$ and thus the current market value of their funds invested in $S$ will grow (if $R_T>0$) or decline (if $R_T<0$) to the terminal value $N_p R_T$ at time $T$.

\newpage

As was already mentioned, the goal of an EPS is to reduce the holder's risk exposure by providing either a full or partial protection against a substantial decline of holder's wealth invested in $S$ by entering into a customised contract offered by a financial company. The two counterparties of an EPS are henceforth referred to as the {\it protection provider} who has the short position in an EPS and the {\it holder} of an EPS (the protection buyer or, simply, the investor) who has the long position in the contract. It should be stressed that aims of the two counterparties
are completely different: the holder is reducing their risk exposure by partially `selling' it to the protection provider, whereas the financial company accept the holder's risk and their goal is to completely hedge their risk exposure using traded financial derivatives (European options on $S$). By analysing available hedging strategies and market prices of options, the company is also able to provide a fair valuation of an EPS in the form of the fee participation rate in holder's gains.

From the viewpoint of the protection provider, the cash flow at time $T$ is given by $N_p \pp (R_T)$ where $N_p$ is the nominal principal of an EPS and a function $\pp : [-1,\infty) \to \bbR$ is used to encode the structure of an EPS.
By the {\it adjusted return} we mean a specific transformation $\pp (R_T)$ of the realised return $R_T$, as given by the covenants of each particular EPS. Before offering a product, the provider of an EPS needs to examine the pricing and hedging problem although, formally, a solution to the pricing and hedging problem is symmetric if the bid-ask spread for traded options is neglected and thus, at least in principle, an investor can also derive that solution since it does not require to use sophisticated mathematical tools.

From the holder's perspective, it is important to notice that the net payoff to the buyer of an EPS is given by $N_p(R_T-\pp(R_T))$ or, equivalently, it equals $R_T-\pp(R_T)$ per one unit of the nominal principal. The benefit from entering an EPS for investors lies in reduction of losses and thus the transformation $\pp$ should be specified in accordance with their individual investment strategies and risk preferences. Therefore, several variants of an EPS are likely to be offered to investors and some of them are studied in the present work.
However, the main features shared by all EPSs are: a desirable reduction of the holder's loss, the null initial cost,
and a fair protection fee in the form of partial sharing of the holder's gain from their investment in $S$. It is clear that an EPS is a standalone
financial derivative written on $S$ and the values of $N_p$ and $S_0$ are unrelated.

We are in a position to formalise the definition of a generic EPS. Unless explicitly stated otherwise, we henceforth assume, without loss of generality, that the notional principal of an EPS equals 1, that is, $N_p=1$. In fact, due to specification of a generic EPS, one could also
assume that $S_0=1$ since only the return $R_T$ is relevant but for clarity of presentation we are not going to make that assumption.
Since it is natural to postulate that an EPS can be hedge using plain-vanilla call and put options, we postulate that the payoff profile $\pp :[-1,\infty)\to \bbR$ of a generic EPS is a piecewise linear, non-decreasing, continuous function such that $\pp (0)=0$ where the last condition is natural.

\begin{definition} \label{EPS} {\rm
In an \textit{equity protection swaps} starting at time $0$ with the maturity date $T$ and the reference portfolio $S$,
the cash flow for the EPS provider is specified by the \textit{adjusted return} $\pp (R_T)$ where the derivative of the \textit{adjusted return function} $\pp:[-1,\infty)\to\mathbb{R}$ equals
\begin{equation} \label{eq0.2.1}
\pp'(R)=\sum_{k=-(n+1)}^{m}\alpha_{k+1}\I_{(\beta_k,\beta_{k+1})}(R)
\end{equation}
where the sequence of endpoints of return ranges, $(\beta_k)_{k=-n}^{m}$, is a strictly increasing finite sequence of real numbers, $\beta_{-(n+1)}=\lbn$ and $\beta_{m+1}=\infty$. The sequence $(\alpha_k)_{k=-n}^{m+1}$ of marginal participation rates is an arbitrary sequence of numbers from $[0,1]$. Furthermore, we assume that $\beta_0=0$ and thus $\lbn=\beta_{-(n+1)}< \beta_{-n}<\cdots <\beta_{-2}< \beta_{-1}<\beta_{0}=0<\beta_{1}<\cdots <\beta_{m}<\beta_{m+1}=\infty$.}
\end{definition}

Consistently with its financial interpretation, we expect that the number $\alpha_k$ will belong to the interval $[0,1]$ since it represents the provider's marginal participation rate in gains or losses of the reference portfolio when the realised return falls within the range of $\beta_k$ to $\beta_{k+1}$ (though in some exceptional cases the fee participation rate can be greater than 1).
In particular, $\alpha_k = 1$ corresponds to the {\it full participation} of the provider in gains or losses, whereas $\alpha_k \in (0,1)$ represents a {\it partial participation} in gains (upside participation rate) or losses (downside participation rate).
Of course, $\alpha_k=0$ means that the provider does not participate in incremental returns belonging to the interval $[\beta_k,\beta_{k+1}]$. Since $\alpha_k$ represents the provider's participation rate, it is clear that the corresponding buyer's participation rate equals $1-\alpha_k$ for every $k$.

\newpage

When analyzing particular instances of fee and protection legs of an EPS, it will be convenient to modify the notation by independently labeling the negative and positive values of $\beta$. Specifically, we write $\lbn = l_{n+1}< l_{n}<\cdots <l_{2}< l_{1}<l_0=0$ for non-positive values of $\beta$ and $0=g_{0}< g_{1}< \cdots <g_{m}< g_{m+1} = \infty$ for non-negative values of $\beta$. For each $k$, the participation rate over the loss interval $[l_{n+1},l_n)$ is denoted as $p_{n+1}$ whereas the participation rate over the gain interval $(g_{k},g_{k+1}]$ is denoted as $f_{k+1}$.

In a simple example where $n=m=1$, we have $\lbn =l_2<l_1<l_0=0=g_{0}<g_1<g_2=\infty$ and we need to specify four values of marginal
participation rates: $p_1$ and $p_2$ for the protection leg and $f_1$ and $f_2$ for the fee leg. It is thus clear that we deal with six parameters, which fully specify the structure of an EPS when $n=m=1$. The following lemma is an easy consequence of \eqref{eq0.2.1} and the postulated equalities $\pp (0)=\pp (g_0)=\pp (l_0)=0$.

\begin{lemma} \label{lem0.2.1}
We have that $\pp (R)\leq 0$ for $R \in [\lbn ,0]$ and $\pp (R)\geq 0$ for $R \in [0,\infty)$. The {\it protection leg} and the {\it fee leg}
of an EPS are given by
\begin{equation*}
\pp^p(R):=\pp(R)\I_{\{R<0\}}, \quad \pp^f(R):=\pp(R)\I_{\{R>0\}}
\end{equation*}
so that $\pp (R_T)=\pp^p(R_T)+\pp^f(R_T)$ where the non-positive payoff $\pp^p(R_T)$ and the non-negative payoff $\pp^f(R_T)$ represent
the provider's loss (protection payout) and gain (fee income) from an EPS, respectively.
\end{lemma}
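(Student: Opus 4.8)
The plan is to derive everything from a single structural fact: the function $\pp$ is non-decreasing on its whole domain $[\lbn,\infty)$. By formula \eqref{eq0.2.1}, on each open interval $(\beta_k,\beta_{k+1})$ the derivative of $\pp$ is the constant $\alpha_{k+1}$, and by hypothesis every marginal participation rate satisfies $\alpha_{k+1}\in[0,1]$, hence in particular $\alpha_{k+1}\geq 0$. Since $\pp$ is continuous and piecewise linear with non-negative slope on each piece, it is non-decreasing; equivalently, for any $\lbn\leq R_1\leq R_2$ one integrates
\begin{equation*}
\pp(R_2)-\pp(R_1)=\int_{R_1}^{R_2}\pp'(u)\,du\geq 0,
\end{equation*}
the integrand being non-negative almost everywhere.

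Next I would read off the two sign assertions using the normalisation $\pp(0)=0$ together with the fact that $\beta_0=0$ is one of the breakpoints. For $R\in[\lbn,0]$ monotonicity gives $\pp(R)\leq\pp(0)=0$, while for $R\in[0,\infty)$ it gives $\pp(R)\geq\pp(0)=0$. This proves the first sentence of the lemma and, simultaneously, shows that the protection payoff $\pp^p(R)=\pp(R)\I_{\{R<0\}}$ is non-positive while the fee payoff $\pp^f(R)=\pp(R)\I_{\{R>0\}}$ is non-negative, which is precisely the financial interpretation asserted at the end.

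Finally I would verify the additive decomposition. For every $R$ the three events $\{R<0\}$, $\{R=0\}$, $\{R>0\}$ partition the real line, so
\begin{equation*}
\pp(R)=\pp(R)\I_{\{R<0\}}+\pp(R)\I_{\{R=0\}}+\pp(R)\I_{\{R>0\}}.
\end{equation*}
The middle term vanishes identically, since $\pp(R)\I_{\{R=0\}}$ can be non-zero only at $R=0$, where $\pp(0)=0$; hence $\pp(R)=\pp^p(R)+\pp^f(R)$, and specialising to $R=R_T$ yields the asserted identity $\pp(R_T)=\pp^p(R_T)+\pp^f(R_T)$.

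I do not expect any genuine obstacle here, as the statement is elementary. The only point requiring a little care is the passage from the derivative data, which is prescribed by \eqref{eq0.2.1} only on the open intervals $(\beta_k,\beta_{k+1})$ and not at the breakpoints, to the monotonicity of $\pp$ on the closed domain; this gap is closed by the continuity of $\pp$ (equivalently, by the fundamental theorem of calculus invoked above), so that the undefined values of $\pp'$ at the finitely many points $\beta_k$ are immaterial.
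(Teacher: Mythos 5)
Your proof is correct, and it takes a slightly different route from the paper's. The paper's own argument is constructive: it integrates the piecewise-constant derivative outward from $0$ in the relabelled notation ($l_i$, $p_i$ for the negative breakpoints and $g_j$, $f_j$ for the positive ones) and writes down explicit closed-form expressions
$\pp^p(R)=\big(\sum_{k=0}^{i-1}p_{k+1}\Delta l_k+p_{i+1}(R-l_i)\big)\I_{\{R\in(l_{i+1},l_i)\}}$ and the analogous formula for $\pp^f$, from which the sign claims are read off term by term (each summand is non-positive for the protection leg because $\Delta l_k<0$ and $p_{k+1}\geq 0$, and non-negative for the fee leg) and the decomposition is declared immediate. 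You instead prove only what the lemma literally asserts, via the softer observation that $\pp'\geq 0$ almost everywhere forces $\pp$ to be non-decreasing, so the signs follow from $\pp(0)=0$, and you handle the decomposition by the indicator partition of $\{R<0\}$, $\{R=0\}$, $\{R>0\}$ --- a point the paper glosses over entirely. Your version is arguably cleaner and more careful for the lemma as stated (including the remark about the derivative being undefined at the breakpoints), but it is worth noting that the explicit piecewise formulas derived in the paper's proof are not incidental: they are quoted and reused verbatim in the proof of Proposition \ref{prop5.2.1} to build the static hedge, so the paper's constructive detour earns its keep later, whereas your argument would leave that representation still to be established.
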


\begin{proof}
Let us define $\Delta l_i=l_{i+1}-l_i<0$ and $\Delta g_j=g_{j+1}-g_j>0$ and recall that $g_0=l_0=0$. Then the protection and fee legs satisfy,
for $i=0,1,\dots,n$  and $j=0,1,\dots,m$,
\begin{align*}
\pp^p(R)&=\Big(\sum_{k=0}^{i-1}p_{k+1}\Delta l_k+p_{i+1}(R-l_i)\Big)\I_{\{R \in (l_{i+1},l_i)\}}, \\
\pp^f(R)&=\Big(\sum_{k=0}^{j-1}f_{k+1}\Delta g_k+f_{j+1}(R-g_j)\Big)\I_{\{R \in (g_j,g_{j+1})\}},
\end{align*}
and it is clear that the adjusted return function of an EPS satisfies $\pp(R)=\pp^p(R)+\pp^f(R)$.
\end{proof}

Lemma \ref{lem0.2.1} leads to the following definition of {\it fairness} of a generic EPS, which explains why the term `swap' is suitable for a generic EPS. Although, by convention, the inception date of an EPS is chosen to be 0, it can be substituted with any date $t<T$.

\begin{definition} \label{EPSfair} {\rm
We say that an EPS initiated at time 0 with a payoff structure $\pp$ is {\it fair} if  the payoff
$\pp (R_T)$ with settlement date $T$ has null arbitrage-free price at its inception time 0.}
\end{definition}

It will be shown in Section \ref{sec3} that the issue of finding a fair structure for an EPS is closely related to the pricing problem
for European options with the underlying asset $S$ and maturity date $T$.

Let us make some comments on Definition \ref{EPS} and Lemma \ref{lem0.2.1}: \begin{itemize}
\item Since the main goal of an EPS is to protect against substantial losses, the loss leg should be chosen by the holder from several available options. As soon as the holder's choice is completed, the issue of `fair pricing' of an EPS hinges on finding parameters for the fee leg that make the contract worthless at its inception.
\item In a typical structure, the fee leg $\pp^f$ (resp., the protection leg $\pp^p$) can be formally identified with the payoff of a particular portfolio of European call options (resp., European put options) written on the underlying static portfolio (e.g., a market index).
\item Suppose now that the underlying portfolio is dynamic, that is, the portfolio is actively managed by a holder of a superannuation account.
    If a holder of an EPS on a given portfolio wishes to switch at some moment before $T$ from one investment option to another, then an EPS can be terminated and settled at its market price and a new EPS can be initiated
    at no cost if this is desired by a holder. This means that a flexible roll-over strategy in EPSs is available since at any time the market value of an EPS is readily available due to the fact that the hedging portfolio is composed of traded options.
\item One could also consider an EPS with a discontinuous adjusted reward function $\pp$ but it is not clear if this would be beneficial for practical purposes and, obviously, the hedging portfolio for an EPS would require options with discontinuous payoff profiles.
\end{itemize}

Despite some similarities, there are several crucial differences between an EPS and a RILA. Firstly, the main difference is that a RILA is special case of a variable annuity while an EPS is a standalone financial derivative (a swap), which is designed as
a complement to a superannuation account. The provider of RILA receives the nominal principal from the investor at the start of contract, whereas there should be no cash flow at the inception of an EPS, of course, provided that it is set up and priced correctly. Although representations of the provider's gains and losses in terms of European options are analogous to those obtained for RILAs in Section 3.2 of Moenig \cite{M2021}, their analysis is different since the provider of an EPS does not need to manage the nominal principal and thus has lower expenses and risks.

Secondly, an EPS has typically a more flexible structure than a RILA, for which one can only consider an existing variable annuity, which is provided by an insurance company and thus always has a fixed structure with either floor, buffer or cap and some predetermined participation rates. In contrast, an EPS is a bespoke financial derivative so its structure can be brokered by the two parties of the contract. Thirdly, the underlying portfolio of the RILA is also fixed (a particular market index), whereas the reference portfolio of an EPS can be chosen arbitrarily, for instance: a market index, an bespoke equity portfolio, or an investment portfolio involving assets from several economies and, as was explained above, if the reference portfolio is changed before the nominal maturity $T$ then an EPS can be terminated by its holder and settled in reference to its current market value.

\subsection{Basic EPS with Proportional Fee and Protection Legs} \label{sec2.2}

Let us assume $n=m=1$ and thus $\lbn < l_{1}<l_0=0=g_{0}< g_{1}< \infty$. The most basic fee leg relies upon setting $f:=f_1=f_2 \in (0,1]$, which means that the provider is awarded proportional participation in all positive returns on a reference portfolio $S$. Similarly, the simplest protection leg is obtained by taking $p:=p_1=p_2 \in (0,1]$, which means the provider has a proportional participation in all negative returns. Then the cash flow $\pp(R_T)$ equals
\begin{align*}
\pp(R_T)=p R_T\I_{\{R_T \in [\lbn,0)\}}+f R_T\I_{\{R_T \in (0,\infty)\}}=-p(-R_T)^+ + f(R_T)^+
\end{align*}
and hence it can be represented by a short position in $p/S_0$ units of a European put combined with a long position in $f/S_0$ units of a
European call written on $S$ with strike $S_0$ and maturity $T$ since
\begin{align*}
-p(-R_T)^+ = -\frac{p}{S_0}\,(S_0-S_T)^+, \quad f(R_T)^+ = \frac{f}{S_0}\, (S_T-S_0)^+.
\end{align*}

We will now show that the basic EPS can be fairly priced in relation to market prices of European call and put options.
Let $\text{Call}_t(K,T)$ and $\text{Put}_t(K,T)$ denote the price at time $t$ of a European call and put on
the reference portfolio with strike $K$ and maturity $T$, respectively. Then
\begin{align*}
\pp (R_T)=\frac{f}{S_0}\,\text{Call}_T(S_0,T)-\frac{p}{S_0}\,\text{Put}_T(S_0,T)
\end{align*}
and thus the fair value of the EPS at time 0 for its provider equals
\begin{align*}
\frac{f}{S_0}\,\text{Call}_0(S_0,T)-\frac{p}{S_0}\,\text{Put}_0(S_0,T).
\end{align*}
An application of the put-call parity, which reads $\text{Call}_0(K,T)-\text{Put}_0(K,T)=S_0 - Ke^{-rT}$, gives the equality $\text{Call}_0(S_0,T)-\text{Put}_0(S_0,T)=S_0(1-e^{-rT})$
and thus yields the following representation for the fair value for the provider of the basic EPS
\begin{align} \label{eq2.30s}
&  \frac{f}{S_0}\,\text{Call}_0(S_0,T)-\frac{p}{S_0}\,\text{Put}_0(S_0,T)=\frac{f}{S_0}\,\text{Call}_0(S_0,T)
  -\frac{p}{S_0}\Big(\text{Call}_0(S_0,T)-S_0\big(1-e^{-rT}\big)\Big)\nonumber  \\ &= \frac{f-p}{S_0}\,\text{Call}_0(S_0,T)+p\big(1-e^{-rT}\big)=0.
\end{align}

Notice that the last equality in \eqref{eq2.30s} is not trivially satisfied for any choice of $f$ and $p$ but is postulated here for consistency with the swap requirement that there should be no cash flow in the basic EPS at time 0 and thus its fair value at time 0 should vanish.
It is clear that if the risk-free rate $r$ and the market price $\text{Call}_0(S_0,T)$ are known, then it is possible to find a relationship between the parameters $f$ and $p$ that makes the basic EPS worthless at time 0. Finally, the theoretical equality $\text{Call}_0(S_0,T)=S_0\,\text{Call}_0(1,T)$ holds and thus the relationship between the fair values of $f$ and $p$ reduces to $(f-p)\,\text{Call}_0(1,T)+p(1-e^{-rT})=0$. It is clear that the equality $f=p$ holds if $r=0$.

In contrast, if $r>0$ then the basic EPS with $f=p$ is not fair since an EPS allows the provider to participate in returns on a portfolio $S$ without actually investing in $S$ his wealth, presumably borrowed at a positive interest rate $r$. Of course, it becomes a fair EPS for some values $f>p$, which can be computed from \eqref{eq2.30s}.  An analogous analysis can be conducted for the basic EPS in the less commonly encountered case where $r<0$.

\subsection{Classification of Fee and Protection Legs} \label{sec2.3}

In this subsection, we will usually assume that $n=m=1$ (except for the \textit{buffer-cap} and \textit{buffer-floor} legs) and thus $\lbn < l_{1}<l_0=0=g_{0}< g_{1}< \infty$. Then each particular structure of an EPS depends on a specification of four participation rates, namely, $p_1$ and $p_2$ for the protection leg and $f_1$ and $f_2$ for the fee leg. For brevity, the parameters $p_1$ and $p_2$ (resp. $f_1$ and $f_2$) are henceforth referred as {\it protection rates} (resp. {\it fee rates}).

Let us first describe the most typical and practically appealing instances of a protection leg. Although all four cases can be considered,
we contend that the buffer (and perhaps also buffer-cap) case would be the most suitable, especially when
combined with an analogous structure for the protection leg. Generally speaking, the goal of an EPS is to give protection
against substantial losses so it is natural to expect that there will be no cash flow at settlement if the realised return falls within the range $(l_1,g_1)$. Only when realised losses are below the trigger $l_1<0$ a buffer (or a buffer-cap) EPS provides a full or partial protection
to its holder who, in exchange, agrees to share with the provider portfolio's gains if they fall above a predetermined level $g_1>0$.

\begin{definition} \label{Fee} {\rm
The \textit{proportional fee leg} is given by $f_1=f_2 \in (0,1]$ so that
\begin{align*} 
\pp^{f,1}(R)=f_1 R\I_{\{R \in (0,\infty)\}}.
\end{align*}
The \textit{buffer fee leg}  is given by $f_1=0$ and $f_2 \in (0,1]$ so that
\begin{align*} 
\pp^{f,2}(R)=f_2(R-g_1)\I_{\{R \in (g_1,\infty)\}}.
\end{align*}
The \textit{cap fee leg}  is given by $f_1\in (0,1]$ and $f_2=0$ so that
\begin{align*}  
\pp^{f,3}(R)=f_1 R \I_{\{R \in (0,g_1]\}}+f_1 g_1\I_{\{R \in (g_1,\infty)\}}.
\end{align*}
The \textit{buffer-cap fee leg}  is given by $f_1=0, f_2\in (0,1]$ and $f_3=0$ so that}
\begin{align*} 
\pp^{f,4}(R)=f_2(R-g_1)\I_{\{R \in (g_1,g_2]\}}+f_2(g_2-g_1)\I_{\{R \in (g_2,\infty)\}}.
\end{align*}
\end{definition}

The most typical specifications of the protection leg are analogous to those of the fee leg. A selection of a particular protection leg depends on the buyer's preferences and thus it would be natural to expect that a broad spectrum of products should be offered by EPS providers.
Assuming that a perfect hedge of an EPS is feasible, the provider would be indifferent with respect to the buyer's choice of the structure of the protection leg. However, in reality only a partial hedging can be attained for more complex cross-currency products and thus some forms of the protection leg are likely to be preferred by providers. For instance, the presence of a floor clause is expected to be appreciated by providers since it provides a lower bound on their downside risk exposure even when the market experiences a catastrophic downturn.

\begin{definition} \label{Prot}
{\rm The \textit{proportional protection leg} is given by $p_1=p_2 \in (0,1]$ so that
\begin{align*} 
\pp^{p,1}(R)=p_1 R\I_{\{R \in [\lbn,0)\}}.
\end{align*}
The \textit{buffer protection leg} is given by $p_1=0$ and $p_2 \in (0,1]$ so that
\begin{align*}  
\pp^{p,2}(R)= p_2(R-l_1) \I_{\{R \in [\lbn,l_1)\}}.
\end{align*}
The \textit{floor protection leg} is given by $p_1\in (0,1]$ and $p_2=0$ so that
\begin{align*} 
\pp^{p,3}(R)= p_1R\I_{\{R \in [l_1,0)\}}+p_1 l_1 \I_{\{R\in [\lbn,l_1)\}}.
\end{align*}
The \textit{buffer-floor protection leg} is given by $p_1=0, p_2\in (0,1]$ and $p_3=0$ so that}
\begin{align*} 
\pp^{p,4}(R)=p_2(R-l_1)\I_{\{R \in [l_2,l_1)\}}+p_2(l_2-l_1)\I_{\{R \in [\lbn,l_2)\}}.
\end{align*}
\end{definition}

\subsubsection{Buffer EPS} \label{sec2.3.1}

Let us introduce two most practically relevant forms of an EPS, which are called the \textit{buffer EPS} and the \textit{floor EPS}. Notice that the proposed terminology for a generic EPS is referring directly to the protection leg, rather than the fee leg for which the choice of a buffer
structure seems to be the most appealing for holders. It should be observed that the net payoff to the holder of a portfolio $S$ and an EPS is given by $R_T-\pp(R_T)$ per one unit of the nominal principal and thus it can be easily computed for any choice of $\pp$.

We first define the \textit{buffer EPS} where, incidentally, the concept of a buffer leg is applied to both the protection and fee legs and thus it could also be called a \textit{double-buffer EPS} or a \textit{buffer/buffer EPS}.

\begin{definition} \label{BEPS}
{\rm A \textit{buffer EPS} is obtained by taking $p_1=f_1=0$ and any values for $p_2\in (0,1]$ and $f_2\in (0,1]$. Hence
$\pp_B (R)=0$ for all $R \in [l_1,g_1]$ where $l_1<0$ and $g_1>0$ and the payoff at maturity $T$ for the provider of a buffer EPS equals}
\begin{align*}
\pp_B (R_T)& := \pp^{p,2}(R_T)+\pp^{f,2}(R_T) = p_2(R_T-l_1) \I_{\{R_T \in [\lbn,l_1)\}}+f_2(R_T-g_1)\I_{\{R_T\in (g_1,\infty)\}}\\
& = -p_2(l_1-R_T)^+ +f_2(R_T-g_1)^+.
\end{align*}
\end{definition}

It is clear that the provider of the buffer EPS only participates in negative returns below the buffer threshold $l_1$ and the positive return above the cap threshold $g_1$ with respective participation rates $p_2$ and $f_2$.  The extremal case where $p_2=1$ and $f_2=1$ corresponds to the provider's full participation in portfolio's losses below the buffer threshold $l_1<0$ combined with his full participation in gains above the cap $g_1>0$.

The adjusted return function $\pp_B$ for the buffer EPS that the provider receives at maturity $T$ is presented in Figure \ref{buffer EPS} where the horizontal axis represents the return of the underlying reference portfolio at maturity and the vertical axis represents the provider's gain or loss. Notice that the protection rate $p_2$ and the fee rate $f_2$ give the slopes of the adjusted return on a reference portfolio.

We will argue that the fair price of the buffer EPS can be easily expressed in terms of prices of European put and call options on a reference portfolio (e.g., a particular investment choice or a market index) with expiry $T$ and suitable nominal values and strikes. If $l_1$ and $g_1$ are given, then for any value of the protection rate $p_2$ chosen by the buyer, one can compute a unique level of the fee rate $f_2$ for which the fair value of the buffer EPS at its inception is null.  


\begin{figure}\centering
\setlength{\unitlength}{1.8mm}
\begin{picture}(24,24)(-10,-10)
\put(0,0){\vector(1,0){27}}
\put(0,0){\vector(0,1){11}}
\put(0,0){\line(-1,0){27}}
\put(0,0){\line(0,-1){10}}
\put(12,-0.5){{\line(0,1){1}}}
\put(-10,-0.5){{\line(0,1){1}}}
\put(-23,-0.5){{\line(0,1){1}}}
\thicklines
\put(0,0){{\color{blue}\line(1,0){12}}}
\put(12,0){{\color{blue}\line(11,9){14}}}
\put(0,0){{\color{blue}\line(-1,0){10}}}
\put(-10,0){{\color{blue}\line(-10,-9){13}}}
\put(-8,12){\scriptsize adjusted realised return $\pp_B(R_T)$}
\put(20,-2){\scriptsize realised return $R_T$}
\put(-24,-2){\scriptsize $-1$}
\put(-10,-2){\scriptsize $l_1$}
\put(11,-2){\scriptsize $g_1$}
\put(-7,1){\scriptsize $p_1=0$}
\put(-17.5,-5){\scriptsize $p_2$}
\put(4.5,1){\scriptsize $f_1=0$}
\put(18,4,5){\scriptsize $f_2$}
\end{picture}
\caption{Provider's adjusted return for a buffer EPS} \label{buffer EPS}
\end{figure}
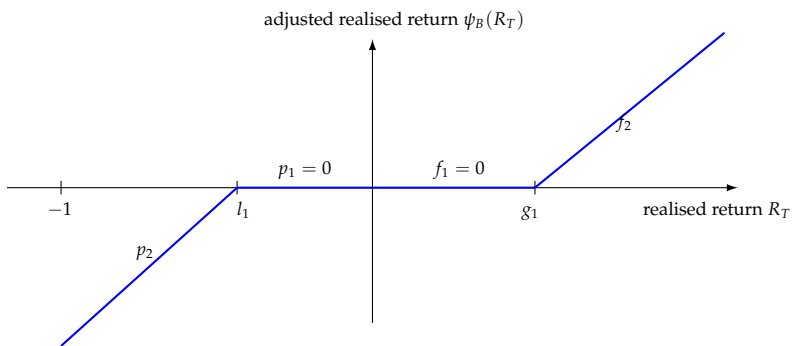

\subsubsection{Floor EPS} \label{sec2.3.2}

The so-called {\it floor EPS}, which can also be called {\it floor/buffer EPS}, is obtained by setting $p_2=0,p_1 \in (0,1],f_1=0$ and $f_2 \in (0,1]$, which means that the provider fully covers the buyer's losses above the level $l_1$ if $p_1=1$ but the buyer's losses below $l_1$ are only partially covered since $p_1 l_1$ acts as a floor for the adjusted return. By convention, the buffer specification is applied to the fee leg and thus the fee leg is exactly the same as in the buffer EPS of Definition \ref{BEPS}. This choice for the fee leg is motivated by our practical arguments that the buffer fee leg is likely to be appreciated by a buyer of an EPS since there will be no fee payment at all, unless the portfolio's gains during the lifetime of an EPS are substantial. The adjusted return function $\pp_F(R_T)$ for the floor EPS is shown in Figure \ref{floor EPS}.

\begin{definition} \label{FEPS}
{\rm A \textit{floor EPS} is specified by parameters $p_2=f_1=0$ and arbitrary values of $p_1 \in (0,1]$ and $f_2 \in (0,1]$.
Hence $\pp_B (R)=0$ for all $R \in [0,g_1]$ where $g_1>0$ and the payoff for the provider equals}
\begin{align*}
\pp_F (R_T)&:= \pp^{p,3}(R_T)+\pp^{f,2}(R_T)=p_1 l_1\I_{\{R_T \in [\lbn,l_1]\}}-p_1(-R_T)^+\I_{\{R_T \in (l_1,0]\}}+f_2(R_T-g_1)^+
\\ &=-p_1 (-R_T)^++p_1(l_1-R_T)^++f_2(R_T-g_1)^+.
\end{align*}
\end{definition}


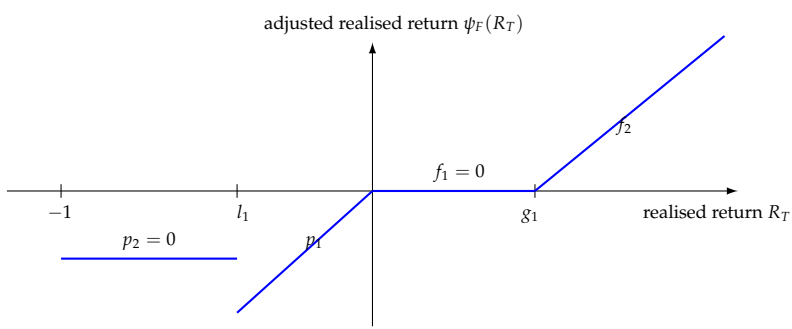
\begin{figure}\centering
\setlength{\unitlength}{1.8mm}
\begin{picture}(24,22)(-10,-10)
\put(0,0){\vector(1,0){27}}
\put(0,0){\vector(0,1){11}}
\put(0,0){\line(-1,0){27}}
\put(0,0){\line(0,-1){10}}
\put(12,-0.5){{\line(0,1){1}}}
\put(-10,-0.5){{\line(0,1){1}}}
\put(-23,-0.5){{\line(0,1){1}}}
\thicklines
\put(0,0){{\color{blue}\line(1,0){12}}}
\put(12,0){{\color{blue}\line(11,9){14}}}
\put(0,0){{\color{blue}\line(-10,-9){10}}}
\put(-10,-5){{\color{blue}\line(-1,0){13}}}
\put(-8,12){\scriptsize adjusted realised return $\pp_F(R_T)$}
\put(20,-2){\scriptsize realised return $R_T$}
\put(-24,-2){\scriptsize $-1$}
\put(-10,-2){\scriptsize $l_1$}
\put(11,-2){\scriptsize $g_1$}
\put(-5,-4){\scriptsize $p_1$}
\put(-18.5,-4){\scriptsize $p_2=0$}
\put(4.5,1){\scriptsize $f_1=0$}
\put(18,4.5){\scriptsize $f_2$}
\end{picture}
\caption{Provider's adjusted return for a floor EPS}\label{floor EPS}
\end{figure}


\subsubsection{Examples of a Buffer and Floor EPS} \label{sec2.3.3}

In Section \ref{sec4}, we present selected numerical results for some EPSs with different payoff structures.
We refer, in particular, to Table \ref{table:EPS} in Section \ref{sec4} for the values of fair premia for EPSs considered in Examples \ref{buffer_ex} and \ref{floor_ex}. It should be noted that an EPS should be designed in such a way that its initial fair premium is null and this is clearly not satisfied in basic examples given below (see, however, Section \ref{sec4.2}).

\begin{example} \label{buffer_ex} {\rm
Consider a buffer EPS with the nominal principal one million AUD (\$1 mm), the buffer threshold $l_1=-5\%$ with the protection rate $p_2=80\%$, the fee threshold $g_1=10\%$ with the fee rate $f_2=80\%$, and maturity one year. Its fair premium within the Black-Scholes setup of Section \ref{sec4} amounts to \$1,168 (see the buffer EPS No. 9 in Table \ref{table:EPS}) and, in three sample scenarios, the buffer EPS is settled at time $T$ in the following way:
\begin{itemize}
\item If the realised rate of return $R_T$ is above $10\%$, say $12\%$, then the buyer pays $(12\% - 10\%) \times 80\% \times \$1\ \mbox{\rm mm} = \$16$ k to the provider.
\item If $R_T$ is between $-5\%$ and $10\%$, then there is no cash exchange between the buyer and provider.
\item If $R_T$ is below $-5\%$, say $-8\%$, then the buyer receives $(-8\% + 5\%) \times 80\% \times \$1\ \mbox{\rm mm}  = \$24$ k from the provider.
\end{itemize} }
\end{example}

\begin{example} \label{floor_ex} {\rm
Consider a floor EPS with the nominal principal \$1 mm, the floor threshold $l_1=-15\%$ with the protection rate $p_1=80\%$, the fee threshold $g_1=10\%$ with the fee rate $f_2=80\%$, and maturity one year. Its fair premium within the Black-Scholes setup equals \$4,276 (see the floor EPS No. 13 in Table \ref{table:EPS}) and the floor EPS is settled at time $T$ as follows:
\begin{itemize}
\item If the realised rate of return $R_T$ is above $10\%$, say $12\%$, then the buyer pays $(12\% - 10\%) \times 80\% \times\$1\ \mbox{\rm mm} = \$16$ k to the provider.
\item If $R_T$ is between $0\%$ and $10\%$, then there is no cash exchange between the buyer and provider.
\item If $R_T$ is between $-15\%$ and $0\%$, say $-8\%$, then the buyer gets $8\% \times 80\% \times \$1\ \mbox{\rm mm}  = \$64$ k from the provider.
\item If $R_T$ is below $-15\%$, say $-20\%$, then the buyer receives $15\% \times 80\% \times \$1\ \mbox{\rm mm}  = \$120$ k from the provider. This is the maximum amount that the buyer may receive, no matter how large is her actual loss above $15\%$ cap.
\end{itemize}}
\end{example}

\subsection{Classification of Equity Protection Swaps}    \label{sec2.4}

Let us make an attempt to classify Equity Protection Swaps according to underlying portfolios. One can distinguish the following
cases: (i) an index EPS, (ii) a bespoke portfolio EPS, (iii) a cross-currency index EPS, and (iv) a cross-currency bespoke portfolio EPS. A classification is important since only for some instances of an EPS a robust (i.e., model-free) pricing and static hedging can be developed, provided that the associated index and equity options are liquidly traded.

Since EPSs are just a specific class of financial derivatives, they can be analysed using well-established methods from the arbitrage pricing theory based on the concept of continuous-time dynamic trading but then, obviously, the issue of a model choice arises even if the theoretical market completeness is postulated so that the replication argument can be used. In contrast, if the market completeness is not assumed, then one needs to refer to less effective pricing tools developed by academics to deal with incomplete models but it should be acknowledged that these methods are less appreciated by the finance industry. In addition, any stochastic model needs to be calibrated to market data and, in practice, it is also periodically recalibrated in order to mitigate the model risk.

Finally, one can also apply the most recent robust methodology where a wide class of models is simultaneously covered but, once again, a large amount of market data are required to obtain reliable pricing results with a narrow bid-ask spread. To sum up, depending on a class of an EPS and availability of traded options one needs to develop the most effective approach to obtain the fair price interval and hedging strategies.

\subsubsection{Index EPS}    \label{sec2.4.1}

As expected, the easiest to handle is an {\it index} EPS where the process $S$ represents a particular stock index in the domestic economy, such as the S\&P/ASX~200 index, which measures the performance of the 200 largest index-eligible stocks listed on the Australian Securities Exchange (ASX). The related exchange-traded index options are cash-settled calls and puts of European style so they can only be exercised on their expiry date. As will be shown in Section \ref{sec3}, due to the fact that options on S\&P/ASX~200 are actively traded, the problem of pricing and static hedging of a generic EPS referencing S\&P/ASX~200 does not require to use any stochastic model for the index dynamics. Also, it will be argued that the fair value of an index EPS can be easily computed at any date in reference to market data for index options so that an index EPS can be made puttable by the holder at its market value. In principle, they can also be made callable by the provider but such a clause would
not suit the main goal of an EPS, which is to give protection to holders of a superannuation account.

Let $\text{Call}_{\,t}(K,U)$ and $\text{Put}_{\,t}(K,U)$ denote the price at time $t$ of a European call and put option written
on the index $S$, with strike $K \geq 0$ and maturity $U$. For future reference, we observe that the following equalities hold,
for any $ -1 \leq l \leq 0$,
\begin{align*}
\big(l-R_S(0,T)\big)^+=\frac{(S_0(1+l)-S_T)^+}{S_0}=\frac{1}{S_0}\,\text{Put}_T(S_0(1+l),T)
\end{align*}
and, similarly, for any nonnegative $g$,
\begin{align*}
\big(R_S(0,T)-g\big)^+=\frac{(S_T-S_0(1+g))^+}{S_0}=\frac{1}{S_0}\,\text{Call}_T(S_0(1+g),T).
\end{align*}
Notice also that $S_0$ is the initial value of the index and the notional principal of an index EPS is any quantity $N_p>0$ denominated in domestic currency. Hence we may and do take $N_p=1$ since $N_p$ is independent of the value of $S_0$ and our pricing and hedging approach through replication is linear.


\subsubsection{Bespoke Portfolio EPS}    \label{sec2.4.2}

A {\it bespoke portfolio} EPS is referencing an arbitrary static portfolio composed of investments in major companies traded on ASX.
European options on individual stocks are traded on ASX for around 20 top companies. Tailor-made options of both European and American style on a larger selection of Australian companies are offered on an over-the-counter basis by the major banks: ANZ, CBA,
NAB, and Westpac and thus equity options are readily available. However, an EPS on a bespoke portfolio of Australian companies is
related to a basket option and thus its pricing and hedging is necessarily relying on a multidimensional stochastic model for equities,
which will typically require to calibrate volatilities and correlations).

Let $S^i$ represent the market price of the $i$th domestic stock from the reference portfolio so that
$R^i(0,T):=(S^i_T-S^i_0)/S^i_0$ is the realised return on the $i$th domestic stock. Then the portfolio's value process
equals $S_t:=\sum_{i=1}^k n_i S^i_t$ and thus the realised return can be represented as follows
\begin{equation*}
R_S(0,T)=(S_T-S_0)/S_0=\frac{\sum_{i=1}^k n_i S^i_T-\sum_{i=1}^k n_i S^i_0}{\sum_{i=1}^k n_i S^i_0}
=\frac{\sum_{i=1}^k n_i R^i(0,T)S^i_0}{\sum_{i=1}^k N^i_p}
\end{equation*}
where $n_i,\,i=1,2,\dots,k$ is the number of shares of the $i$th stock held at time 0 and thus the wealth invested
at time 0 in the $i$th stock equals $N^i_p:=n_i S^i_0$ and the notional principal equals $N_p =\sum_{i=1}^k N^i_p$.

As in the case of an index EPS, it is convenient to assume, without loss of generality, that the notional principal of an EPS equals
$N_p=1$ so that $\sum_{i=1}^k N^i_p=1$ where $N^i_p>0$ for every $i$. Equivalently, we denote by $\pi^i_0:= n_iS^i_0/\sum_{j=1}^k n_j S^j_0$
a (constant) proportion of the nominal principal invested at time 0 in shares of the $i$th stock so that $\pi^i_0>0$ for every $i$ and $\sum_{i=1}^k \pi^i_0=1$. Under this convention, for the realised return on a bespoke portfolio we obtain
\begin{equation*}
R_S(0,T)=\sum_{i=1}^k N^i_p R^i(0,T)=\sum_{i=1}^k \pi^i_0 R^i(0,T).
\end{equation*}
Consequently, for any number $-1\leq l \leq 0$ we obtain the following equalities
\begin{align*}
(l-R_S(0,T))^+&= \frac{1}{S_0}\,(S_0(1+l)-S_T)^+ =\bigg(l-\sum_{i=1}^k \pi^i_0 R^i(0,T)\bigg)^+
=\bigg((1+l)-\sum_{i=1}^k \pi^i_0 \frac{S^i_T}{S^i_0} \bigg)^+
\end{align*}
where the last equality shows that in the case of a bespoke portfolio we deal with a particular basket put option. Similarly,
for any number $g \geq 0$ we have that
\begin{align*}
(R_S(0,T)-g)^+&=\frac{1}{S_0}\,(S_T-S_0(1+g))^+=\bigg(\sum_{i=1}^k \pi^i_0 R^i(0,T)-g\bigg)^+
=\bigg(\sum_{i=1}^k \pi^i_0 \frac{S^i_T}{S^i_0}-(1+g)\bigg)^+,
\end{align*}
which is the payoff of a basket call option. Since options on a bespoke portfolio are not traded, it can be useful to observe
that the following inequalities hold
\begin{align*}
\bigg(l-\sum_{i=1}^k \pi^i_0 R^i(0,T)\bigg)^+\leq \sum_{i=1}^k \pi^i_0 \big(l-R^i(0,T)\big)^+
=\sum_{i=1}^k \frac{\pi^i_0}{S^i_0}\,\text{Put}_T(S^i_0(1+l),T)
\end{align*}
and
\begin{equation*}
\bigg(\sum_{i=1}^k \pi^i_0 R^i(0,T)-g\bigg)^+\leq \sum_{i=1}^k \pi^i_0 \big(R^i(0,T)-g\big)^+
=\sum_{i=1}^k \frac{\pi^i_0}{S^i_0}\,\text{Call}_T(S^i_0(1+g),T).
\end{equation*}
In general, one can use results pertaining to arbitrage-free pricing of basket options of European style, for instance
in a multidimensional version of the Black-Scholes model.

\subsubsection{Cross-Currency EPS}    \label{sec2.4.3}

A {\it cross-currency market index} EPS references a combination of stock indices in foreign securities markets and possibly also the domestic one. Similarly, a {\it cross-currency bespoke portfolio} EPS is based on the realised return on a bespoke portfolio of domestic and foreign equities. Understandably, the pricing and hedging of a cross-currency EPS is a rather complex exercise but it falls within the well-studied area of arbitrage pricing of cross-currency derivatives. Since, by construction, all payoffs from an EPS are expressed in domestic currency, an analysis of a cross-currency EPS requires stochastic models not only for relevant indices (or stocks) but also applicable exchange rates. It is expected that not only domestic and foreign equity options, but also currency options or forward contracts can be used for hedging of cross-currency EPSs although, typically, a static hedging portfolio is unlikely to be available. For instance, exchange-traded options on U.S. indices and stocks are available for investors in Australia but they are regular options denominated in U.S. dollar and thus the exchange rate risk is not accounted for in their payoff profile, which explains why they are not suitable for construction of a static hedging portfolio for a cross-currency AUD/USD EPS.

For concreteness, we assume that U.S. equities market plays the role of a generic foreign market. Then the exchange rate $Q$ is an $\ff$-adapted stochastic process, which at any date $t$ is quoted as
\begin{equation*}
Q_t=\frac{\text{Number of units of the domestic currency (AUD)}}{\text{One unit of the foreign currency (USD)}}.
\end{equation*}
If $S^i$ represents the market price of the $i$th foreign stock from the reference portfolio, then the realised
return on $S^i$ expressed in domestic currency is given by
\[
R^{Q,i}(0,T):=(Q_T S^i_T- Q_0 S^i_0)/Q_0S^i_0.
\]
Therefore, if a cross-currency EPS references a bespoke portfolio, which is composed of domestic stocks (or indices) $S^i$ for $i=1,2,\dots ,m$ and foreign stocks (or indices) $S^i$ for $i=m+1,m+2,\dots ,k$, then
\[
S_t=\sum_{i=1}^m n_i S^i_t + \sum_{i=m+1}^k n_i Q_t S^i_t
\]
and thus if we assume that $N_p =1$ so that $\sum_{i=1}^k N^i_p=1$ where $N^i_p=n_i S^i_0$ for $i=1,2,\dots ,m$ and $N^i_p=n_i Q_0S^i_0$ for $i=m+1,m+2,\dots ,k$, then
\begin{equation*}
R_S(0,T) 
=\sum_{i=1}^m N^i_p R^i(0,T)+\sum_{i=m+1}^k N^i_p R^{Q,i}(0,T)=\sum_{i=1}^m \pi^i_0 R^i(0,T)+\sum_{i=m+1}^k \pi^i_0 R^{Q,i}(0,T)
\end{equation*}
where, as before, $\pi^i_0$ is a proportion of the nominal principal invested at time 0 in shares of the $i$th stock.

Consequently, for any $l\leq 0$,
\begin{align*}
(l-R_T)^+&= \frac{1}{S_0}\,(S_0(1+l)-S_T)^+=\bigg(l-\sum_{i=1}^m \pi^i_0 R^i(0,T)-\sum_{i=m+1}^k \pi^i_0 R^{Q,i}(0,T)\bigg)^+ \\
& =\bigg((1+l) - \sum_{i=1}^m \pi^i_0 \frac{S^i_T}{S^i_0}-\sum_{i=m+1}^k \pi^i_0 \frac{Q_TS^i_T}{Q_0S^i_0} \bigg)^+
\end{align*}
and, for any $g \geq 0$,
\begin{align*}
(R_T-g)^+&=\frac{1}{S_0}\,(S_T-S_0(1+g))^+=\bigg(\sum_{i=1}^k \pi^i_0 R^i(0,T)+\sum_{i=m+1}^k \pi^i_0 R^{Q,i}(0,T)-g\bigg)^+\\
& =\bigg( \sum_{i=1}^m \pi^i_0 \frac{S^i_T}{S^i_0}-\sum_{i=m+1}^k \pi^i_0 \frac{Q_TS^i_T}{Q_0S^i_0} - (1+g) \bigg)^+.
\end{align*}
It is readily seen that extensions of these expressions to the case of several foreign markets can be obtained. Although we do not elaborate in pricing of hedging of cross-currency EPSs, it should be observed that this can be done in a cross-currency multidimensional variant of the Black-Scholes model complemented by a stochastic model for exchange rates, e.g., the classical Garman-Kohlhagen model.

\section{Hedging Strategies for Equity Protection Swaps}  \label{sec3}

Although an EPS can be based on a virtually arbitrary portfolio of traded domestic and foreign securities, we find it convenient to focus on an EPS referencing a domestic stock index for which there exists a liquid market of related exchange-traded options. The foregoing analysis formally covers also some other EPSs provided that call and put options involved in a static hedging strategy of Proposition \ref{prop5.2.1}
are actively traded either on organised exchanges or OTC market. However, if an EPS involves several domestic indices or a bespoke portfolio of domestic assets, then it represents a multi-asset financial derivative and thus its static hedging using only index options and single-name stock options cannot be achieved.

After representing the payoff of a generic EPS in terms of a combination of European calls and puts, our next goal is to build a hedging portfolio for two particular cases of an EPS and, subsequently, to extend it to a generic EPS. Let us temporarily assume that the provider receives at time 0 from the buyer of an EPS a negative or positive premium $c$ per one unit of the nominal principal. According to our convention, the provider receives at time $T$ the cash flow $\pp (R_T)$ from the buyer.

Since the buyer's goal is to use an EPS as a tool for the reduction of the risk exposure associated with their investment in $S$, so the holder is concerned with the net position $R_T - \pp (R_T)$. In contrast, the provider is not holding the reference portfolio and thus he is only concerned
with elimination of the risk associated with the cash flow $\pp (R_T)$ through replication (if feasible) of the opposite cash flow, that is, the payoff $-\pp (R_T)$ at time $T$. The price of an EPS computed by the provider (seller) of an EPS will also be fair for its holder (buyer), under the assumption that the provider and holder have access to the same set of traded options and the bid-ask spread for traded options is disregarded, e.g., by focusing on the mid-price.

We henceforth take the view of the provider who establishes at time 0 an appropriate portfolio of plain-vanilla call and put options written on the underlying portfolio $S$. We assume that the hedging portfolio, denoted as $H$, costs $H_0$ at time 0 and yields a random payoff $H_T$ at maturity date $T$. Unlike a RILA, the expected expenses of the holder's portfolio are not relevant for the provider since it is not exchanged
at time 0 between the two parties. Hence the management fee charge, which is considered when studying RILAs and other similar variable annuities, is also immaterial because the provider does not manage the holder's portfolio. Note also that if a static hedge of an EPS using traded European options is feasible, then it suffices to independently investigate the hedging portfolios for the two legs of an EPS and combine the respective hedges in order to obtain a complete static hedge for the whole EPS. For simplicity of presentation, we henceforth assume that the continuously compounded annualised risk-free rate equals $r$ although this assumption is not essential.

\begin{definition} \label{EPS CCF} {\rm
For an EPS introduced in Definition \ref{EPS}, the provider's \textit{hedged cash flow} at time $T$ per one unit of the nominal principal $N_p$, denoted by $\CCF_T(c,H)$, equals
\begin{equation*}
\CCF_T(c,H)=(c-H_0)e^{rT}+H_T+\pp (R_T)
\end{equation*}
where the provider's payoff $\pp (R_T)$ from an EPS is given by \eqref{eq0.2.1}. We say that an EPS can be \textit{statically hedged} by the provider if there exists a static portfolio $\whH$ established at time 0 and an initial premium $\whc \in \mathbb{R}$ such the equality $\CCF_T(\whc,\whH)=0$ holds almost surely. Then the number $\whc$ is called the \textit{fair premium} for an EPS per one unit of the nominal principal.}
\end{definition}

\subsection{Static Hedge for a Buffer Index EPS} \label{sec3.1}

In the next step, we establish the existence of a static hedge, and hence the initial premium, for the most typical examples of an index EPS. Furthermore, we analyse particular structures of an EPS for which the fair premium is null and hence there is no cash flow at inception. In Sections \ref{sec3.1}, \ref{sec3.2} and \ref{sec3.3}, we assume that European options written on a reference index $S$ are liquidly traded and we will show that a static hedge for an index EPS is feasible. Therefore, the pricing and hedging results are independent of a stochastic model for the portfolio's value process $S$.

To justify this claim, it suffices to show that the terminal payoff of an EPS can be expressed through a combination of European call and put options written on $S$.  We work here under the postulate that European options with $S$ as the underlying asset are actively traded and we denote by $\text{Call}_{\,t}(K,T)$ and $\text{Put}_{\,t}(K,T)$ denote the price at time $t$ of a European call and put option written on $S$, with strike $K \geq 0$ and maturity $T$.

The provider's cash flow $\CCF_T(c,H)$ for the hedged buffer EPS equals
\begin{align} \label{eq0.7.2}
\CCF_T(c,H)&=(c-H_0)e^{rT}+H_T+ \pp_B (R_T)\nonumber   \\
&=(c-H_0)e^{rT}+H_T - p_2 (l_1-R_T)^+ + f_2 (R_T-g_1)^+ \\
&=(c-H_0)e^{rT}+H_T -\frac{p_2}{S_0}\,\text{Put}_T(S_0(1+l_1),T)+ \frac{f_2}{S_0}\,\text{Call}_T(S_0(1+g_1),T) \nonumber
\end{align}
where to obtain the last equality, it suffices to observe (as in Section \ref{sec2.4.1}) that
\begin{align*}
 (l_1-R_T)^+=\frac{(S_0(1+l_1)-S_T)^+}{S_0}=\frac{1}{S_0}\,\text{Put}_T(S_0(1+l_1),T)
\end{align*}
and, similarly,
\begin{align*}
(R_T-g_1)^+=\frac{(S_T-S_0(1+g_1))^+}{S_0}=\frac{1}{S_0}\,\text{Call}_T(S_0(1+g_1),T).
\end{align*}
From the above equations, it is readily seen that the buffer EPS can be statically hedged by investing in European options
written on $S$. Specifically, we deduce from \eqref{eq0.7.2} that the provider should take the following positions at time $0$:
\begin{itemize}
\item Long $\frac{p_2}{S_0}$ units of a put with strike $K^l_1:=S_0(1+l_1)$,
\item Short $\frac{f_2}{S_0}$ units of a call with strike $K^g_1:=S_0(1+g_1)$,
\end{itemize}
since this portfolio, denoted as $\whH$, provides at time $T$ the payoff
\begin{align*}
\whH_T=\frac{p_2}{S_0}\,\text{Put}_T(K^l_1,T)-\frac{f_2}{S_0}\,\text{Call}_T(K^g_1,T)=p_2 (l_1-R_T)^+ - f_2 (R_T-g_1)^+=- \pp_B (R_T),
\end{align*}
as was required. It is clear that its initial value equals
\begin{align*}
\whH_0= \frac{p_2}{S_0}\,\text{Put}_0(K^l_1,T)-\frac{f_2}{S_0}\,\text{Call}_0(K^g_1,T)
\end{align*}
and for the above choice of the static hedge $\whH$ we have that $\CCF_T(\whc,\whH)=(\whc-\whH_0)e^{rT}=0$, which shows
that the fair premium for a buffer EPS  with the nominal principal $N_p=1$ satisfies $\whc=\whH_0$. Notice also that portfolio's wealth $\whH_T$ can be represented explicitly in terms of the realised value of the index $S_T$  (see Figure \ref{hedge buffer EPS})
\begin{align*}
\whH_T=\frac{p_2}{S_0}\,(K^l_1-S_T)\I_{\{S_T \in [0,K^l_1]\}}-\frac{f_2}{S_0}\,(S_T-K^g_1)\I_{\{S_T \in [K^g_1,\infty[\}}.
\end{align*}

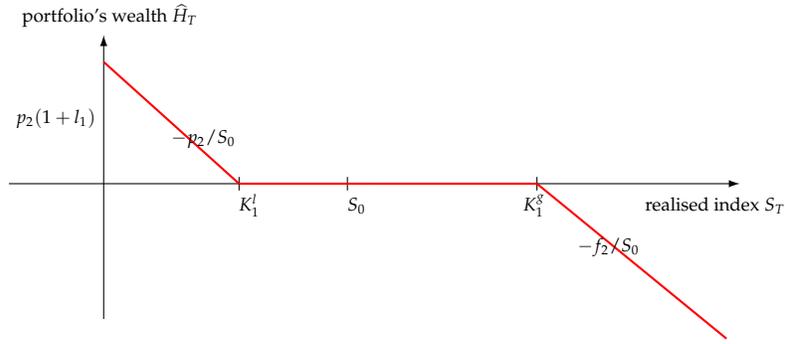
\begin{figure}\centering
\setlength{\unitlength}{1.8mm}
\begin{picture}(24,24)(-10,-10)
\put(0,0){\vector(1,0){27}}
\put(-20,0){\vector(0,1){11}}
\put(0,0){\line(-1,0){27}}
\put(-20,0){\line(0,-1){10}}
\put(12,-0.5){{\line(0,1){1}}}
\put(-10,-0.5){{\line(0,1){1}}}
\put(-2,-0.5){{\line(0,1){1}}}
\thicklines
\put(0,0){{\color{red}\line(1,0){12}}}
\put(12,0){{\color{red}\line(11,-9){14}}}
\put(0,0){{\color{red}\line(-1,0){10}}}
\put(-10,0){{\color{red}\line(-10,9){10}}}
\put(-26,12){\scriptsize portfolio's wealth $\whH_T$}
\put(20,-2){\scriptsize realised index $S_T$}
\put(-26.5,4.5){\scriptsize $p_2(1+l_1)$}
\put(-10,-2){\scriptsize $K_1^l$}
\put(-2,-2){\scriptsize $S_0$}
\put(11,-2){\scriptsize $K_1^g$}
\put(-15,3){\scriptsize $-p_2/S_0$}
\put(15,-5){\scriptsize $-f_2/S_0$}
\end{picture}
\caption{Provider's static hedge for a buffer EPS} \label{hedge buffer EPS}
\end{figure}

\subsection{Static Hedge for a Floor Index EPS}  \label{sec3.2}

An analysis of a static hedge for a floor EPS is analogous to a buffer EPS with some modifications of the payoff.
The provider's cash flow $\CCF_T(c,H)$ for the hedged floor EPS equals, per one unit of the nominal principal,
\begin{align*}
& \CCF_T(c,H)=(c-H_0)e^{rT}+H_T+\pp_F (R_T) \\
&= (c-H_0)e^{rT}+H_T-p_1 (-R_T)^+ +p_1(l_1-R_T)^+ +f_2(R_T-g_1)^+.\nonumber
\end{align*}
We claim that the provider of the floor EPS can statically hedge a floor EPS by investing at time 0 in three European
options on $S$:
\begin{itemize}
\item Short $\frac{p_1}{S_0}$ units of a put with strike $K^l_1:=S_0(1+l_1)$,
\item Long $\frac{p_1}{S_0}$ units of a put with strike $S_0$,
\item Short $\frac{f_2}{S_0}$ units of a call with strike $K^g_1:=S_0(1+g_1)$.
\end{itemize}
This portfolio $\whH$ of European options yields at time $T$ the payoff given by
\begin{align*}
\whH_T& =-\frac{p_1}{S_0}\,\text{Put}_T(K^l_1,T)+\frac{p_1}{S_0}\,\text{Put}_T(S_0,T)- \frac{f_2}{S_0}\,\text{Call}_T(K^g_1,T) \\
&=-p_1(l_1-R_T)^+ + p_1 (-R_T)^+ - f_2 (R_T-g_1)^+ \\
&=-p_1 l_1\I_{\{R_T \in [\lbn,l_1]\}}+p_1(-R_T)^+\I_{\{R_T \in (l_1,0]\}} -f_2(R_T-g_1)^+ = -\pp_F (R_T)
\end{align*}
and thus the fair premium for the floor EPS  with the nominal principal $N_p=1$ equals
\begin{equation*}
\whc=\whH_0=-\frac{p_1}{S_0}\,\text{Put}_0(K^l_1,T)+\frac{p_1}{S_0}\,\text{Put}_0(S_0,T)-\frac{f_2}{S_0}\,\text{Call}_0(K^g_1,T).
\end{equation*}
It can also be checked that $\whH_T$ can be represented as follows (see Figure \ref{hedge floor EPS})
\begin{align*}
\whH_T=\frac{p_1}{S_0}\,(S_0-K^l_1)\I_{\{S_T \in [0,K^l_1]\}}+\frac{p_1}{S_0}(S_0-S_T)\I_{\{S_T \in [K^l_1,S_0]\}}
 -\frac{f_2}{S_0}\,(S_T-K^g_1)\I_{\{S_T \in [K^g_1,\infty[\}}
\end{align*}
where $\frac{p_1}{S_0}\,(S_0-K^l_1)=-p_1 l_1$ is the (positive) difference between the prices at time 0 of put options
with strikes $S_0$ and $K^l_1$.

\begin{figure}\centering
\setlength{\unitlength}{1.8mm}
\begin{picture}(24,24)(-10,-10)
\put(0,0){\vector(1,0){27}}
\put(-20,0){\vector(0,1){11}}
\put(0,0){\line(-1,0){27}}
\put(-20,0){\line(0,-1){10}}
\put(12,-0.5){{\line(0,1){1}}}
\put(-10,-0.5){{\line(0,1){1}}}
\put(-2,-0.5){{\line(0,1){1}}}
\thicklines
\put(0,0){{\color{red}\line(1,0){12}}}      
\put(0,0){{\color{red}\line(-1,0){2}}}      
\put(12,0){{\color{red}\line(11,-9){14}}}   
\put(-20,4){{\color{red}\line(1,0){10}}}    
\put(-10,4){{\color{red}\line(10,-9){8}}}
\put(-26,12){\scriptsize portfolio's wealth $\whH_T$}
\put(20,-2){\scriptsize realised index $S_T$}
\put(-10,-2){\scriptsize $K_1^l$}
\put(-24.5,3.5){\scriptsize $-p_1l_1$}
\put(-2,-2){\scriptsize $S_0$}
\put(11,-2){\scriptsize $K_1^g$}
\put(-6,2){\scriptsize $-p_1/S_0$}
\put(15,-5){\scriptsize $-f_2/S_0$}
\end{picture}
\caption{Provider's static hedge for a floor EPS} \label{hedge floor EPS}
\end{figure}
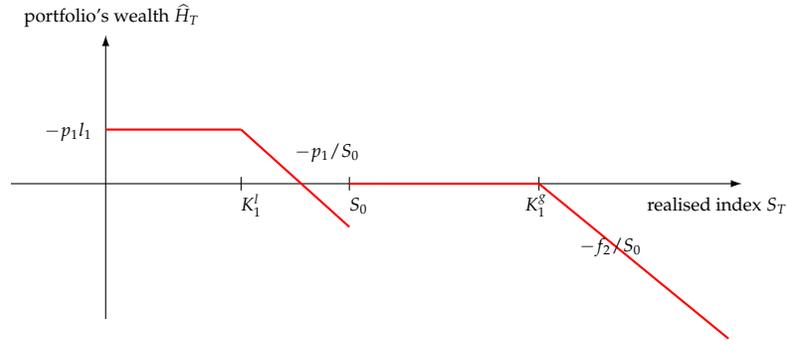

\subsection{Static Hedge for a Generic Index EPS} \label{sec3.3}

We do not present an explicit solution for the provider's static hedge of an EPS with the buffer-floor protection leg and buffer fee leg since it is covered by a result for a generic EPS of Definition~\ref{EPS} (see Example~\ref{buffer_ex1}). Generally speaking, if the parameters $\alpha_k$ and $\beta_k$ in Definition~\ref{EPS} are given, then the question of fair valuation of an EPS reduces to the computation of the fair premium $\whc=\whH _0$. However, a more interesting question arises if it is not assumed a priori that $\whc=0$ but at least one of parameters $\alpha_k$ or $\beta_k$ is left unspecified. Then the issue of fair pricing hinges on finding the value of that parameter (typically, the provider's fee rate) for which the initial value $\whH_0$ is null so that the fair premium $\whc$ vanishes as well. By definition, a {\it static portfolio} in European put and call options is a vector $h=(a_0,a_1,a_2,\dots ,a_n,b_0,b_1,b_2,\dots ,b_m)\in \mathbb{R}^{n+m}$ and the value process $H_t,\, t \in [0,T]$, is given by the equality, for every $t\in [0,T]$,
\begin{equation*}
H_t=\sum_{i=0}^{n}a_i\,\text{\rm Put}_t(K^l_i,T) + \sum_{j=0}^{m}b_j\,\text{\rm Call}_t(K^g_j,T).
\end{equation*}
The above arguments lead to the following hedging result for a generic index EPS with virtually arbitrary structure of the two legs in which
the fair premium is computed in relation to a family of call and put options on $S$ with maturity $T$ and various strikes.
Recall that $l_0=g_0=0$. For convenience, we also adopt the convention that $f_0=p_0=0$.

\begin{proposition} \label{prop5.2.1}
The static hedge $(\whc,\whH)$ composed of call and put options for an index EPS with the nominal principal $N_p=1$ is obtained by setting $a_j=(p_{i+1}-p_i)/S_0$ and $b_j=-(f_{j+1}-f_j)/S_0$ so that
\begin{equation} \label{xeq5.2.1}
\whH_T=\sum_{i=0}^{n}\frac{p_{i+1}-p_i}{S_0}\,\text{\rm Put}_T(K^l_i,T)-\sum_{j=0}^{m}\frac{f_{j+1}-f_j}{S_0}\,\text{\rm Call}_T(K^g_j,T)
 = - \pp (R_T)
\end{equation}
where $K^l_i=S_0(1+l_i)$ and $K^g_j=S_0(1+g_j)$ for every $i=0,1,\dots, n$ and $j=0,1,\dots ,m$ and the fair premium for an EPS
satisfies $\whc = \whH _0$, that is,
\begin{equation}   \label{eq5.2.1}
\whc =\sum_{i=0}^{n}\frac{p_{i+1}-p_i}{S_0}\,\text{\rm Put}_0(K^l_i,T)-\sum_{j=0}^{m}\frac{f_{j+1}-f_j}{S_0}\,\text{\rm Call}_0(K^g_j,T).
\end{equation}
\end{proposition}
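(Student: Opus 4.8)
The plan is to establish the pathwise identity \eqref{xeq5.2.1} by handling the put and the call positions separately, identifying them with the protection leg $\pp^p$ and the fee leg $\pp^f$ of Lemma \ref{lem0.2.1}, and then to read off the fair premium from Definition \ref{EPS CCF}.

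First I would re-express the terminal option payoffs in terms of the realised return $R_T$. Exactly as in Section \ref{sec2.4.1}, for all admissible $i$ and $j$ one has $S_0^{-1}\text{Put}_T(K^l_i,T)=(l_i-R_T)^+$ and $S_0^{-1}\text{Call}_T(K^g_j,T)=(R_T-g_j)^+$, since $K^l_i=S_0(1+l_i)$ and $K^g_j=S_0(1+g_j)$. Substituting these into the right-hand side of \eqref{xeq5.2.1}, the claim reduces to the two pathwise identities
\[
\sum_{i=0}^{n}(p_{i+1}-p_i)(l_i-R)^+=-\pp^p(R),\qquad \sum_{j=0}^{m}(f_{j+1}-f_j)(R-g_j)^+=-\pp^f(R),
\]
valid for every $R\in[\lbn,\infty)$; adding them and invoking $\pp=\pp^p+\pp^f$ from Lemma \ref{lem0.2.1} then yields $\whH_T=-\pp(R_T)$.

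The key step is a slope-matching (Abel summation) argument for each of these piecewise linear identities. Both sides are continuous, piecewise linear in $R$ and vanish at $R=0$: indeed every summand $(l_i-0)^+$ and $(0-g_j)^+$ is zero because $l_i\le 0\le g_j$, while $\pp^p(0)=\pp^f(0)=0$. It therefore suffices to match the slopes on each sub-interval. On a loss interval $R\in(l_{k+1},l_k)$ one has $R<l_i$ exactly for $i\le k$, so differentiating the put sum gives the slope $-\sum_{i=0}^{k}(p_{i+1}-p_i)=-p_{k+1}$, where the telescoping uses the convention $p_0=0$; this matches the slope $-p_{k+1}$ of $-\pp^p$ read off from \eqref{eq0.2.1}. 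The symmetric computation on a gain interval $R\in(g_k,g_{k+1})$, where $R>g_j$ exactly for $j\le k$ and $f_0=0$, shows the call sum has slope $-f_{k+1}$, matching $-\pp^f$. Two continuous piecewise linear functions that agree at a single point and share the same slope on every piece coincide, which proves \eqref{xeq5.2.1}.

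Finally, for the premium I would substitute $\whH_T=-\pp(R_T)$ into the hedged cash flow of Definition \ref{EPS CCF}, obtaining $\CCF_T(\whc,\whH)=(\whc-\whH_0)e^{rT}$, which vanishes almost surely if and only if $\whc=\whH_0$. Linearity of arbitrage-free pricing then converts the time-$0$ value of the static portfolio into \eqref{eq5.2.1}. I expect the only delicate point to be the bookkeeping of the telescoping sums under the boundary conventions $l_0=g_0=0$ and $p_0=f_0=0$, together with the correct identification of which strikes are active on each return interval; as an alternative one can dispense with slopes and verify the two identities directly on each interval by inserting the explicit expressions for $\pp^p$ and $\pp^f$ supplied by Lemma \ref{lem0.2.1}.
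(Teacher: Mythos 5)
Your argument follows essentially the same route as the paper's proof: split $\pp$ into the protection and fee legs of Lemma \ref{lem0.2.1}, convert the option payoffs into functions of the realised return via $S_0^{-1}\text{Put}_T(K^l_i,T)=(l_i-R_T)^+$ and $S_0^{-1}\text{Call}_T(K^g_j,T)=(R_T-g_j)^+$, and use the fact that the increments $p_{i+1}-p_i$ and $f_{j+1}-f_j$ telescope to the marginal participation rate active on each return interval. The paper presents this as an incremental construction of the option portfolio (add $\frac{f_{k+1}-f_k}{S_0}$ calls at each new threshold, and similarly for puts), while your slope-matching formulation verifies the same telescoping more explicitly; the final step identifying $\whc=\whH_0$ from $\CCF_T(\whc,\whH)=(\whc-\whH_0)e^{rT}$ is exactly the paper's.

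One sign slip needs fixing. On a gain interval $R\in(g_k,g_{k+1})$ the sum $\sum_{j=0}^{m}(f_{j+1}-f_j)(R-g_j)^+$ has slope $+\sum_{j=0}^{k}(f_{j+1}-f_j)=+f_{k+1}$, not $-f_{k+1}$, so your second displayed identity should read $\sum_{j=0}^{m}(f_{j+1}-f_j)(R-g_j)^+=+\pp^f(R)$; it is the call term as it actually appears in \eqref{xeq5.2.1}, with its leading minus sign, that equals $-\pp^f(R_T)$. Consequently the two identities must be combined with the signs of \eqref{xeq5.2.1} rather than simply ``added.'' This is a bookkeeping error, not a conceptual one: with the sign corrected, the slope computation, the agreement of both sides at $R=0$, and the conclusion $\whH_T=-\pp(R_T)$ all go through as you describe.
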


\begin{proof}
Observe that, using the ideas introduced in Lemma \ref{lem0.2.1}, we can split an EPS into two parts, the protection and fee legs, and then compute a static hedge for each of them independently. Recall from Lemma \ref{lem0.2.1} that $\pp (R_T)=\pp^f (R_T)-\pp^p (R_T)$ where for every $i=0,1,\dots,n$ and $j=0,1,\dots ,m$
\begin{align*}
\pp^p(R_T)&=\Big(\sum_{k=0}^{i-1}p_{k+1}\Delta l_k+p_{i+1}(R_T-l_i)\Big)\I_{\{R \in (l_{i+1},l_i)\}}, \\
\pp^f(R_T)&=\Big(\sum_{k=0}^{j-1}f_{k+1}\Delta g_k+f_{j+1}(R_T-g_j)\Big)\I_{\{R \in (g_j,g_{j+1})\}}.
\end{align*}
According to Definition \ref{EPS CCF}, it suffices to find a particular static portfolio of options, denoted as $\wh h$, such that $\whH_T + \pp (R_T)=0$ and set $\whc=\whH_0$ so that $\CCF_T(\whc,\whH)=0$.

We first consider the fee leg of an EPS. Observe that either $f_1>0$ or $f_1=0$ and $f_2>0$.
We now start with the first nonzero fee rate $f_{j+1}$ so that $f_{j+1}>f_j=0$ and thus either $f_{j+1}=f_1$ or $f_{j+1}=f_2$.
In order to hedge the fee leg, the provider needs to short $\frac{f_{j+1}}{S_0}$ units of European call option with strike $K^g_j=S_0(1+g_j)$ at time 0. Then, if the inequality $f_{k+1}>f_k$ holds for some $k=j+1,j+2,\dots,m$, the provider needs to short further $\frac{f_{k+1}-f_k}{S_0}$ units of a European call option with strike $K^g_k=S_0(1+g_k)$. If, on the contrary, we have that $f_{k+1}<f_k$, then the provider needs to long $\frac{f_k-f_{k+1}}{S_0}$ units of a European call option with strike $K^g_k=S_0(1+g_k)$. This portfolio of call options constitutes the first part of a static hedge.

It now remains to find a hedge for the protection leg of an EPS. Notice that we have that either $p_1>0$ or $p_1=0$ and $p_2>0$.
We start with the first nonzero protection rate $p_{i+1}$ so that $p_{i+1}>p_i=0$ where, by convention, $p_0=0$ so that either $p_{i+1}=p_1$ or $p_{i+1}=p_2$. In order to hedge the protection leg of an EPS, the provider needs to long $\frac{p_{i+1}}{S_0}$ units of European put option with strike $K^l_i=S_0(1+l_i)$ at the inception date of an EPS. Then, for any $k$ such that $p_{k+1}>p_k$ with $k=i+1,i+2,\dots,n$, the provider should long $\frac{p_{k+1}-p_k}{S_0}$ units of a European put option with strike $K^l_k=S_0(1+l_k)$. If, however, the inequality $p_{k+1}<p_k$ holds, then the provider need to short $\frac{p_k-p_{k+1}}{S_0}$ units of a European put option with strike $K^l_k=S_0(1+l_k)$. These put options give the static hedge for the fee leg.

The equality $\whc = \whH _0$ and hence equality \eqref{eq5.2.1} for the fair premium $\whc$ are now immediate consequences of Definition \ref{EPS CCF}.
\end{proof}

In Proposition \ref{prop5.2.1} we offer a solution to the hedging problem for an EPS but the issue of uniqueness of a static hedge for an EPS is not examined and, as usual when dealing with a combination of European options, it may fail to hold, in general. However, the uniqueness of the fair premium for an EPS relative to a set of market prices of traded options is always true since its non-uniqueness would be inconsistent with the principle of arbitrage-free pricing. In the context of Proposition \ref{prop5.2.1}, European options written on $S$ maturing at $T$ are assumed to play the role of primary traded assets and an EPS is a derivative security. Hence the pricing formula \eqref{eq5.2.1} is {\it robust,} meaning that it does not depend on a choice of a stochastic model for $S$, provided that options prices are observed.
However, Proposition \ref{prop5.2.1} is also valid in the framework of any arbitrage-free model for assets $B$ and $S$ where $B$ represents the money market account and $S$ is assumed to be a primary traded asset so that European options are derivative securities and their arbitrage-free prices are computed, not observed on the market.
Finally, it is clear the same arguments can be applied the case of an EPS written on a bespoke portfolio of domestic and foreign indices or stocks although, for technical reasons, the problem may become more difficult to solve.

\begin{example} \label{buffer_ex1} {\rm
 As a practical illustration of the static hedge $(\whc,\whH)$ obtained in Proposition \ref{prop5.2.1}, let us consider an index EPS with buffer-floor protection leg and buffer-cap fee leg, as introduced in Definitions \ref{Fee} and \ref{Prot}. Then the modified
realised return equals $\pp (R)=\pp^{p,4}(R)-\pp^{f,4}(R)$ where
\begin{align*}
\pp^{p,4}(R)=p_2(R-l_1)\I_{\{R \in [l_2,l_1)\}}+p_2(l_2-l_1)\I_{\{R \in [\lbn,l_2)\}}
\end{align*}
and
\begin{align*}
\pp^{f,4}(R)=f_2(R-g_1)\I_{\{R \in (g_1,g_2]\}}+f_2(g_2-g_1)\I_{\{R \in (g_2,\infty)\}}.
\end{align*}
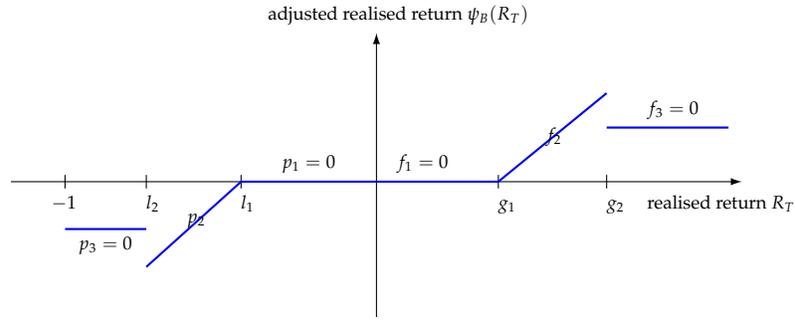
\begin{figure}\centering
\setlength{\unitlength}{1.8mm}
\begin{picture}(24,24)(-10,-10)
\put(0,0){\vector(1,0){27}}
\put(0,0){\vector(0,1){11}}
\put(0,0){\line(-1,0){27}}
\put(0,0){\line(0,-1){10}}
\put(9,-0.5){{\line(0,1){1}}}
\put(17,-0.5){{\line(0,1){1}}}
\put(-10,-0.5){{\line(0,1){1}}}
\put(-17,-0.5){{\line(0,1){1}}}
\put(-23,-0.5){{\line(0,1){1}}}
\thicklines
\put(0,0){{\color{blue}\line(1,0){9}}}
\put(9,0){{\color{blue}\line(11,9){8}}}
\put(17,4){{\color{blue}\line(1,0){9}}}
\put(0,0){{\color{blue}\line(-1,0){10}}}
\put(-10,0){{\color{blue}\line(-10,-9){7}}}
\put(-17,-3.5){{\color{blue}\line(-1,0){6}}}
\put(-8,12){\scriptsize adjusted realised return $\pp_B(R_T)$}
\put(20,-2){\scriptsize realised return $R_T$}
\put(-24,-2){\scriptsize $-1$}
\put(-10,-2){\scriptsize $l_1$}
\put(-17,-2){\scriptsize $l_2$}
\put(9,-2){\scriptsize $g_1$}
\put(17,-2){\scriptsize $g_2$}
\put(-7,1){\scriptsize $p_1=0$}
\put(-14,-3){\scriptsize $p_2$}
\put(-22,-5){\scriptsize $p_3=0$}
\put(1.5,1){\scriptsize $f_1=0$}
\put(12.5,3){\scriptsize $f_2$}
\put(20,5){\scriptsize $f_3=0$}
\end{picture}
\caption{Provider's adjusted return for a buffer-floor/buffer-cap EPS} \label{buffer EPS1}
\end{figure}
Since $f_1=f_3=0, f_2\in (0,1]$ and $p_1=p_3=0, p_2\in (0,1]$ a direct application of equality \eqref{xeq5.2.1} with $n=m=2$ yields
\begin{align*}
\whH_T
&=\frac{p_{2}}{S_0}\,\Big(\text{\rm Put}_T(K^l_1,T)
-\text{\rm Put}_T(K^l_2,T)\Big) -\frac{f_{2}}{S_0}\Big(\text{\rm Call}_T(K^g_1,T)-\text{\rm Call}_T(K^g_2,T)\Big)
\end{align*}
where $K^l_i=S_0(1+l_i)<S_0$ and $K^l_i=S_0(1+g_i)>S_0$ for $i=1,2$ and, obviously, $\whc=\whH_0$.
\begin{figure}\centering
\setlength{\unitlength}{1.8mm}
\begin{picture}(24,24)(-10,-10)
\put(0,0){\vector(1,0){27}}
\put(-20,0){\vector(0,1){11}}
\put(0,0){\line(-1,0){27}}
\put(-20,0){\line(0,-1){10}}
\put(10,-0.5){{\line(0,1){1}}}
\put(17,-0.5){{\line(0,1){1}}}
\put(3,-0.5){{\line(0,1){1}}}
\put(-12,-0.5){{\line(0,1){1}}}
\put(-4,-0.5){{\line(0,1){1}}}
\thicklines
\put(-2,0){{\color{red}\line(1,0){12}}}      
\put(-2,0){{\color{red}\line(-1,0){2}}}      
\put(10,0){{\color{red}\line(11,-9){7}}}    
\put(-20,4){{\color{red}\line(1,0){8}}}      
\put(17,-3.5){{\color{red}\line(1,0){9}}}      
\put(-12,4){{\color{red}\line(10,-9){8}}}
\put(-26,12){\scriptsize portfolio's wealth $\whH_T$}
\put(20,-2){\scriptsize realised index $S_T$}
\put(-12,-2){\scriptsize $K^l_2$}
\put(-27,3.5){\scriptsize $p_2(l_1-l_2)$}
\put(-4,-2){\scriptsize $K^l_1$}
\put(2,-2){\scriptsize $S_0$}
\put(9,-2){\scriptsize $K_1^g$}
\put(17,-2){\scriptsize $K_2^g$}
\put(-8.5,2.5){\scriptsize $-p_2/S_0$}
\put(11,-4){\scriptsize $-f_2/S_0$}
\end{picture}
\caption{Provider's static hedge for a buffer-floor/buffer-cap EPS} \label{hedge floor buffer EPS}
\end{figure}
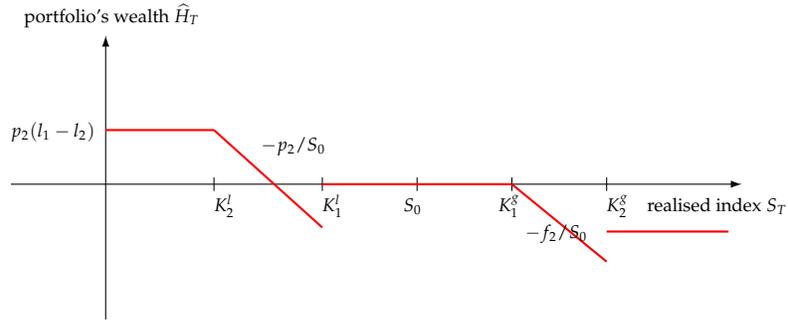

} \end{example}

\section{Forward Performance Testing of Buffer and Floor EPSs} \label{sec4}

To provide the reader with a preliminary insight into the properties of an EPS, we first report results of forward performance tests for some classes of EPSs. For simplicity and concreteness, we employ the Black-Scholes options pricing formula to compute the fair premium for an EPS, as specified through its static hedging strategy of Definition \ref{EPS CCF}.

As in the classical Black-Scholes model, we assume that the dynamics of the process $S$ under the martingale probability measure $\Q$ are given by the SDE
\begin{equation} \label{BSM}
dS_t = (r-\kappa) S_t\,dt + \sigma S_t\,dW_t , \quad S_0 >0 ,
\end{equation}
where $W$ is a one-dimensional standard Brownian motion, $\sigma >0$ is a constant volatility parameter, $r$ is a constant risk-free interest rate,
and $\kappa$ is a constant dividend yield. Obviously, this also means that the process $Z_t=S_t/S_0$ satisfies
\begin{equation} \label{BSMR}
dZ_t = (r-\kappa) Z_t\,dt + \sigma Z_t\,dW_t , \quad Z_0=1 .
\end{equation}
Under the postulated dynamics of the portfolio's value process, we have the following well-known result from \cite{BS1973}, which gives the arbitrage-free prices of European call and put options on $S$ with a fixed strike $K>0$ and maturity date $T$.

\newpage

\begin{theorem} \label{BSE}
The arbitrage-free price of a European call option with the payoff $(S_T-K)^+$ at maturity date $T$ equals, for every $t\in [0,T)$,
\begin{equation*} 
C(S_t,t)=e^{-\kappa(T-t)}S_t N\big(d_+(S_t,T-t)\big) - e^{-r(T-t)}K N\big(d_-(S_t,T-t)\big)
\end{equation*}
where
\begin{equation*}
d_{\pm}(S_t,T-t) = \frac{1}{\sigma \sqrt{T-t}} \bigg[\ln \frac{S_t}{K} + \bigg(r - \kappa \pm \frac{1}{2} \sigma^2\bigg)(T-t) \bigg]
\end{equation*}
and $N$ is the cumulative density function of the standard normal distribution. Similarly, the arbitrage-free price of a European put option
with the payoff $(K-S_T)^+$ at maturity date $T$ is given by, for every $t\in [0,T)$,
\begin{equation*} 
P(S_t,t)= e^{-r(T-t)} K N\big(-d_-(S_t,T-t)\big) -e^{-\kappa (T-t)} S_t N\big(-d_+(S_t,T-t)\big).
\end{equation*}
\end{theorem}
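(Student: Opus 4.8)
The plan is to use risk-neutral valuation, since the dynamics \eqref{BSM} of $S$ are already specified under the martingale measure $\Q$. By the general arbitrage-free pricing rule, the price of the call at time $t$ is the discounted conditional expectation of its terminal payoff,
\[
C(S_t,t)=e^{-r(T-t)}\,\EQ\big[(S_T-K)^+\,\big|\,\cF_t\big],
\]
so the whole task reduces to evaluating this expectation explicitly.

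First I would solve the linear SDE \eqref{BSM}, which is a geometric Brownian motion, to obtain the closed form $S_T=S_t\exp\big((r-\kappa-\tfrac12\sigma^2)(T-t)+\sigma(W_T-W_t)\big)$. Since the increment $W_T-W_t$ is independent of $\cF_t$ and normally distributed with mean $0$ and variance $T-t$, I would write $W_T-W_t=\sqrt{T-t}\,Z$ with $Z$ standard normal and thereby replace the conditional expectation by an ordinary Gaussian integral in which $S_t$ enters merely as a parameter. The region of integration is fixed by the elementary observation that $S_T>K$ is equivalent to $Z>-d_-(S_t,T-t)$, which already explains the appearance of the argument $d_-$.

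The computation then splits into two integrals. The term involving $K$ is immediate: it gives $-K\,N(d_-(S_t,T-t))$ by the symmetry of the normal density, using $\int_{-d_-}^{\infty}\varphi=N(d_-)$. The term involving $S_T$ is the only real obstacle and I expect it to be the crux of the argument, since one must integrate the factor $e^{\sigma\sqrt{T-t}\,Z}$ against the Gaussian weight. The standard device is to complete the square in the exponent, $-\tfrac12 z^2+\sigma\sqrt{T-t}\,z=-\tfrac12(z-\sigma\sqrt{T-t})^2+\tfrac12\sigma^2(T-t)$, which cancels the $-\tfrac12\sigma^2(T-t)$ term in the drift, produces the factor $e^{(r-\kappa)(T-t)}$, and shifts the lower limit of integration from $-d_-$ to $-d_+=-d_--\sigma\sqrt{T-t}$. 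After discounting by $e^{-r(T-t)}$ this yields exactly $e^{-\kappa(T-t)}S_t\,N(d_+(S_t,T-t))$, and combining the two pieces gives the stated call formula.

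Finally, I would obtain the put price without repeating the integration by invoking put-call parity in the presence of a dividend yield, namely $C(S_t,t)-P(S_t,t)=e^{-\kappa(T-t)}S_t-e^{-r(T-t)}K$; substituting the call formula and using $N(x)-1=-N(-x)$ for $x=d_{\pm}$ rearranges at once into the claimed expression for $P$. Alternatively, the put integral can be evaluated directly by the same completing-the-square technique, but the parity route is shorter and reuses the work already done for the call.
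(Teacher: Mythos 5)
Your derivation is correct and complete: risk-neutral valuation, the explicit geometric Brownian motion solution, the change of variable $W_T-W_t=\sqrt{T-t}\,Z$, completing the square to produce $e^{-\kappa(T-t)}S_t N(d_+)$, and put-call parity for the put are all carried out accurately (including the correct shift $-d_+=-d_--\sigma\sqrt{T-t}$ of the integration limit). The paper itself gives no proof of this statement --- it records the formula as the well-known Black--Scholes result with a dividend yield and cites the original reference --- so your argument supplies precisely the standard derivation the authors are implicitly relying on.
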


By combining Proposition \ref{prop5.2.1}  with Theorem \ref{BSE}, we immediately obtain the following result.

\begin{proposition} \label{BSEc}
The arbitrage-free price at time 0 in the Black-Scholes model of a generic index EPS with the nominal principal $N_p=1$ equals
\begin{align*}
\whc &=\sum_{i=0}^{n}(p_{i+1}-p_i)\Big(e^{-rT} (1+l_i) N\big(-h_-(l_i,T)\big)-e^{-\kappa T} N\big(-h_+(l_i,T)\big) \Big) \\
& -\sum_{j=0}^{m}(f_{j+1}-f_j)\,\Big(e^{-\kappa T} N\big(h_+(g_j,T)\big)-e^{-rT}(1+g_j)N\big(h_-(g_j,T)\big)\Big)
\end{align*}
where we denote, for every $x>-1$,
\begin{equation*}
h_{\pm}(x,T) = \frac{1}{\sigma \sqrt{T}} \bigg[-\ln (1+x)+ \bigg(r - \kappa \pm \frac{1}{2} \sigma^2\bigg)T\bigg].
\end{equation*}
\end{proposition}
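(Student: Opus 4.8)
The plan is to obtain the formula by direct substitution, combining the model-free premium formula \eqref{eq5.2.1} of Proposition \ref{prop5.2.1} with the explicit Black--Scholes option prices supplied by Theorem \ref{BSE}. Recall that Proposition \ref{prop5.2.1} expresses the fair premium of a generic index EPS as
\begin{equation*}
\whc =\sum_{i=0}^{n}\frac{p_{i+1}-p_i}{S_0}\,\text{Put}_0(K^l_i,T)-\sum_{j=0}^{m}\frac{f_{j+1}-f_j}{S_0}\,\text{Call}_0(K^g_j,T),
\end{equation*}
with the strikes parametrised through returns as $K^l_i=S_0(1+l_i)$ and $K^g_j=S_0(1+g_j)$. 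Since all the options mature at $T$ and the dynamics \eqref{BSM} are assumed, each of these prices is given by the closed-form expressions of Theorem \ref{BSE} evaluated at $t=0$, and it remains only to simplify them for these particular strikes and then divide by $S_0$.

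The single nontrivial observation is the simplification of the log-moneyness term. Evaluating $d_\pm(S_0,T)$ from Theorem \ref{BSE} at the strike $K=S_0(1+l_i)$ gives $\ln(S_0/K)=-\ln(1+l_i)$, so that
\begin{equation*}
d_\pm(S_0,T)=\frac{1}{\sigma\sqrt{T}}\Big[-\ln(1+l_i)+\big(r-\kappa\pm\tfrac12\sigma^2\big)T\Big]=h_\pm(l_i,T),
\end{equation*}
and the identical computation at $K=S_0(1+g_j)$ yields $d_\pm(S_0,T)=h_\pm(g_j,T)$. In other words, the strikes are chosen precisely so that the initial value $S_0$ cancels inside the moneyness and the option arguments collapse to the functions $h_\pm$ appearing in the statement.

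With this identity in hand I would substitute the put price, which by Theorem \ref{BSE} (with $K=K^l_i$) reads $\text{Put}_0(K^l_i,T)=e^{-rT}K^l_i\,N(-h_-(l_i,T))-e^{-\kappa T}S_0\,N(-h_+(l_i,T))$, and divide by $S_0$, using $K^l_i/S_0=1+l_i$, to obtain $S_0^{-1}\text{Put}_0(K^l_i,T)=e^{-rT}(1+l_i)N(-h_-(l_i,T))-e^{-\kappa T}N(-h_+(l_i,T))$; the analogous treatment of the call price gives $S_0^{-1}\text{Call}_0(K^g_j,T)=e^{-\kappa T}N(h_+(g_j,T))-e^{-rT}(1+g_j)N(h_-(g_j,T))$. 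Inserting these two expressions into the premium formula above reproduces exactly the claimed identity for $\whc$, with the minus sign in front of the second sum carried over from \eqref{eq5.2.1}. I expect essentially no obstacle here beyond careful bookkeeping of signs and of the factor $S_0$; the entire argument is a verification that the structural premium formula \eqref{eq5.2.1} specialises to the Black--Scholes closed form once the convenient parametrisation of strikes through returns is used.
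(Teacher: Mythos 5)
Your proposal is correct and follows exactly the route the paper takes: the paper gives no separate proof but simply states that the result is obtained "by combining Proposition \ref{prop5.2.1} with Theorem \ref{BSE}," which is precisely the substitution of the Black--Scholes put and call prices at strikes $K^l_i=S_0(1+l_i)$ and $K^g_j=S_0(1+g_j)$ into the premium formula \eqref{eq5.2.1}, with the observation that $d_\pm(S_0,T)$ collapses to $h_\pm(\cdot,T)$ and the factor $S_0$ cancels. Your bookkeeping of the signs and of the division by $S_0$ matches the stated formula, so nothing is missing.
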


Notice that the price $\whc$ of an EPS does not depend on the initial value $S_0$ of the underlying portfolio (e.g., a reference index). This remarkable property was expected since the arbitrage-free price of an EPS should depend on the interest rate $r$, the dividend yield $\kappa$, the notional principal $N_p$ (here chosen to be equal to 1) and the volatility $\sigma$ of the random return rate $R_T$ on portfolio $S$, but it should be independent on the actual size of a reference portfolio $S$. This is due to the fact that the future returns on investor's holdings in a portfolio $S$ coincide with the future returns on the reference portfolio $S$ (e.g., the S\&P~500 index). Of course, to obtain the pricing formula for an EPS, the dynamics of the process $S$ was specified under the martingale probability measure $\Q$ but the real-world benefits from holding an EPS will be examined in Section \ref{sec5} through backtesting under the statistical probability $\mathbb{P}$.

We stress that the Black-Scholes prices of European options are only used here for illustrative purposes. It is well known that the Black-Scholes model has some limitations and it is not fully consistent with the real-world options market. In Section \ref{sec4}, we use this model in order to compare the theoretical fair premia for EPSs with different parameters but in Section \ref{sec5} we refer instead to the market data for traded European options to obtain empirical pricing results and conclusions in regard of the price dependence on a structure of an EPS. We also give a preliminary analysis of sensitivities of the price of an EPS with respect to the underlying asset $S$ and model parameters.

Needless to say, the pricing of an EPS is dynamic and thus the arbitrage-free price of a fair EPS at any date after its inception date is usually nonzero and with random values that can be either positive or negative for each party. It is thus natural to ask the following question: suppose that a holder decided to enter into a one-year EPS with inception date 0: what should be the arbitrage-free price at time 0 of an option to extend the EPS with the same covenants for the second year at no additional cost at time $T=1$? More generally, one can examine an option to extend an EPS at time $T=1$ for the next $n$ years at no additional cost at time $T=1$. It is rather clear that the answer to the question of fair valuation of options to extend an EPS hinges on arbitrage-free pricing of a particular compound option on $S$ in either the domestic market or a cross-currency framework so the usual pricing techniques based on dynamic hedging can be applied.

\subsection{Theoretical Black-Scholes Pricing of an EPS} \label{sec4.1}

We first compute the fair premia for typical instances of an EPS from the perspective of their provider whose aim is to
offer an index EPS to members of super funds. As was explained, in practice the pricing of an EPS should rely on observed
market prices of traded options but, for clarity of presentation, we assume in Section \ref{sec4} that arbitrage-free option prices
are obtained using Theorem \ref{BSE}. It is well known that the Black-Scholes model does not yield option prices that perfectly match the market data across various strikes and maturities. Nevertheless, it is widely accepted by practitioners for market quotes for options via their implied volatilities and is largely sufficient for illustration of the model-free pricing method established in Proposition \ref{prop5.2.1}. Regarding the parameters of the Black-Scholes dynamics, we assume throughout that the interest rate $r=1.5\%$, the dividend yield $\kappa=0$ (consistently with the market convention for stock indices), and the volatility parameter $\sigma=20\%$ (as a proxy for the one-year volatility index).

\begin{table}[h!]
\centering
\caption{Theoretical fair premia of buffer and floor equity protection swaps}
\begin{tabular}{cccccccccc}
\hline
No. &EPS class &$T$&$l_1$&$g_1$&$p_1$&$p_2$&$f_1$&$f_2$&Fair premium \\
\hline
[1]&Buffer & 1 year & -5\% & 5\% & 0 & 0.5 & 0 & 0.5 & -0.007890 \\

[2]&Buffer & 1 year & -5\% & 5\% & 0 & 0.8 & 0 & 0.5 & 0.006873 \\

[3]&Buffer & 1 year & -5\% & 5\% & 0 & 0.8 & 0 & 0.8 & -0.012623 \\

[4]&Buffer & 1 year & -5\% & 10\% & 0 & 0.5 & 0 & 0.5 & 0.000730 \\

[5]&Buffer & 1 year & -5\% & 10\% & 0 & 0.6 & 0 & 0.5 & 0.005651 \\

[6]&Buffer & 1 year & -5\% & 10\% & 0 & 0.7 & 0 & 0.5 & 0.010572 \\

[7]&Buffer & 1 year & -5\% & 10\% & 0 & 0.8 & 0 & 0.5 & 0.015493 \\

[8]&Buffer & 1 year & -5\% & 10\% & 0 & 0.9 & 0 & 0.5 & 0.020414 \\

[9]&Buffer & 1 year & -5\% & 10\% & 0 & 0.8 & 0 & 0.8 & 0.001168 \\

[10]&Buffer & 1 year & -10\% & 10\% & 0 & 0.5 & 0 & 0.5 & -0.008082 \\

[11]&Buffer & 1 year & -10\% & 10\% & 0 & 0.8 & 0 & 0.5 & 0.001393 \\

[12]&Buffer & 1 year & -10\% & 10\% & 0 & 0.8 & 0 & 0.8 & -0.012931 \\

[13]&Buffer & 2 year & -5\% & 5\% & 0 & 0.8 & 0 & 0.5 & 0.006601 \\

[14]&Buffer & 2 year & -5\% & 10\% & 0 & 0.8 & 0 & 0.5 & 0.015960 \\

[15]&Buffer & 2 year & -10\% & 10\% & 0 & 0.8 & 0 & 0.5 & 0.000237 \\

\hline
[1]&Floor & 1 year & -5\% & 5\% & 0.5 & 0 & 0 & 0.5 & -0.021180 \\

[2]&Floor & 1 year & -5\% & 5\% & 0.8 & 0 & 0 & 0.5 & -0.014391 \\

[3]&Floor & 1 year & -5\% & 10\% & 0.5 & 0 & 0 & 0.5 & -0.012560 \\

[4]&Floor & 1 year & -5\% & 10\% & 0.8 & 0 & 0 & 0.5 & -0.005771 \\

[5]&Floor & 1 year & -5\% & 10\% & 0.8 & 0 & 0 & 0.8 & -0.049895 \\

[6]&Floor & 1 year & -10\% & 10\% & 0.5 & 0 & 0 & 0.5 & -0.003748 \\

[7]&Floor & 1 year & -10\% & 10\% & 0.8 & 0 & 0 & 0.5 & 0.008328 \\

[8]&Floor & 1 year & -15\% & 10\% & 0.5 & 0 & 0 & 0.5 & 0.002672 \\

[9]&Floor & 1 year & -15\% & 10\% & 0.6 & 0 & 0 & 0.5 & 0.007982 \\

[10]&Floor & 1 year & -15\% & 10\% & 0.7 & 0 & 0 & 0.5 & 0.013291 \\

[11]&Floor & 1 year & -15\% & 10\% & 0.8 & 0 & 0 & 0.5 & 0.018601 \\

[12]&Floor & 1 year & -15\% & 10\% & 0.9 & 0 & 0 & 0.5 & 0.023910 \\

[13]&Floor & 1 year & -15\% & 10\% & 0.8 & 0 & 0 & 0.8 & 0.004276 \\

[14]&Floor & 2 year & -5\% & 5\% & 0.8 & 0 & 0 & 0.5 & -0.033581 \\

[15]&Floor & 2 year & -5\% & 10\% & 0.8 & 0 & 0 & 0.5 & -0.024222 \\

[16]&Floor & 2 year & -10\% & 10\% & 0.8 & 0 & 0 & 0.5 & -0.008500 \\

[17]&Floor & 2 year & -15\% & 10\% & 0.8 & 0 & 0 & 0.5 & 0.004358 \\

[18]&Floor & 2 year & -15\% & 10\% & 0.8 & 0 & 0 & 0.8 & -0.021315 \\
\hline
\end{tabular}
\label{table:EPS}
\end{table}

In Table \ref{table:EPS}, we report the fair premia $\whc$ for selected buffer and floor EPSs. Recall from Proposition \ref{prop5.2.1} that the fair premium satisfies $\whc=\whH_0$ and thus its value can be found using the Black-Scholes formula, as was shown in Proposition \ref{BSEc}. We give the fair premia for fifteen buffer EPSs and eighteen floor EPSs with different sets of parameters.

\newpage

Let us analyse some basic features of hedging costs for an EPS. First, we can conclude that for a buffer EPS, the maturity of either one or two years does not make a big difference on the fair premium since its static hedge involves only two options with opposite directions (a long put and a short call). In contrast, the fair premium for a floor EPS is markedly different when the maturity increases. Thus for an EPS with same types of options and different directions, the maturity should not make a big influence on fair premium with all else variables equal. We need to mention that, similar to RILA examined in \cite{M2021}, we find it natural to assume that one-year or two-year EPSs should be offered by a typical provider and some providers may also offer contracts with longer maturities.

Recall from Definition \ref{BEPS} of the buffer EPS that $l_1$ is the level of negative returns that triggers protection and $g_1$ is the level of positive returns such that the provider begins to collect fees. From rows 2, 7, 11 in the top half of Table \ref{table:EPS} we can see that, with other variables being fixed, if either $l_1$ or $g_1$ increases, then the fair premium grows. According to Definition \ref{FEPS},  the parameter $l_1$ in the floor EPS is the level of negative returns such that the provider does not offer any additional compensation beyond that level. As can be seen from cases 4, 7, 11 in the bottom half of Table \ref{table:EPS} the fair premium for the floor EPS increases when the level $l_1$ decreases, as was expected.

%
%
%
%
%
%

Recall that the parameters $p_2$ and $f_2$ of the buffer EPS represent the provider's participation rates in portfolio's losses and gains, respectively. Obviously, the EPS buyer would appreciate if the provider bears more losses and is entitled to less gains. This explains why the buyer is required to pay a higher fair premium when the protection rate $p_2$ is increased, as can be seen from the top half of Table \ref{subtable2:EPS}. Similarly, the buyer still should pay a higher fair premium when the fee rate $f_2$ is lower, as shown in rows 2, 3, 7, 9, 11, and 12 in the buffer EPS of Table \ref{table:EPS}.

Analogous properties can be observed for the floor EPS,  The fair premium is higher when the protection rate $p_1$ is increased, as can be seen in rows 8-12 in the bottom half of Table \ref{subtable3:EPS}, and is lower when the fee rate $f_2$ is increased, as can be observed in rows 11, 13, 17, 18 in the bottom half of Table \ref{subtable3:EPS}. We stress that all these properties (except for the one regarding maturity) can be formally established by considering the dependence of the payoff profile of an EPS on its parameters and hence they are universally true.

\begin{table}[h!]
\centering
\caption{Fair premium dependence on parameters $p_1$ (floor) and $p_2$ (buffer)}
\begin{tabular}{cccccccccc}
\hline
No. &EPS class &$T$&$l_1$&$g_1$&$p_1$&$p_2$&$f_1$&$f_2$&Fair premium \\
\hline
[4]&Buffer & 1 year & -5\% & 10\% & 0 & 0.5 & 0 & 0.5 & 0.000730 \\

[5]&Buffer & 1 year & -5\% & 10\% & 0 & 0.6 & 0 & 0.5 & 0.005651 \\

[6]&Buffer & 1 year & -5\% & 10\% & 0 & 0.7 & 0 & 0.5 & 0.010572 \\

[7]&Buffer & 1 year & -5\% & 10\% & 0 & 0.8 & 0 & 0.5 & 0.015493 \\

[8]&Buffer & 1 year & -5\% & 10\% & 0 & 0.9 & 0 & 0.5 & 0.020414 \\

\hline
[8]&Floor & 1 year & -15\% & 10\% & 0.5 & 0 & 0 & 0.5 & 0.002672 \\

[9]&Floor & 1 year & -15\% & 10\% & 0.6 & 0 & 0 & 0.5 & 0.007982 \\

[10]&Floor & 1 year & -15\% & 10\% & 0.7 & 0 & 0 & 0.5 & 0.013291 \\

[11]&Floor & 1 year & -15\% & 10\% & 0.8 & 0 & 0 & 0.5 & 0.018601 \\

[12]&Floor & 1 year & -15\% & 10\% & 0.9 & 0 & 0 & 0.5 & 0.023910 \\

\hline
\end{tabular}
\label{subtable2:EPS}
\end{table}

\begin{table}[h!]
\centering
\caption{Fair premium dependence on fee rate $f_2$}
\begin{tabular}{cccccccccc}
\hline
No. &EPS class &$T$&$l_1$&$g_1$&$p_1$&$p_2$&$f_1$&$f_2$&Fair premium \\
\hline
[2]&Buffer & 1 year & -5\% & 5\% & 0 & 0.8 & 0 & 0.5 & 0.006873 \\

[3]&Buffer & 1 year & -5\% & 5\% & 0 & 0.8 & 0 & 0.8 & -0.012623 \\
\hline

[7]&Buffer & 1 year & -5\% & 10\% & 0 & 0.8 & 0 & 0.5 & 0.015493 \\

[9]&Buffer & 1 year & -5\% & 10\% & 0 & 0.8 & 0 & 0.8 & 0.001168 \\
\hline

[11]&Buffer & 1 year & -10\% & 10\% & 0 & 0.8 & 0 & 0.5 & 0.001393 \\

[12]&Buffer & 1 year & -10\% & 10\% & 0 & 0.8 & 0 & 0.8 & -0.012931 \\

\hline
[11]&Floor & 1 year & -15\% & 10\% & 0.8 & 0 & 0 & 0.5 & 0.018601 \\

[13]&Floor & 1 year & -15\% & 10\% & 0.8 & 0 & 0 & 0.8 & 0.004276 \\

\hline
[17]&Floor & 2 year & -15\% & 10\% & 0.8 & 0 & 0 & 0.5 & 0.004358 \\

[18]&Floor & 2 year & -15\% & 10\% & 0.8 & 0 & 0 & 0.8 & -0.021315 \\
\hline
\end{tabular}
\label{subtable3:EPS}
\end{table}

When comparing buffer EPSs with floor EPSs, we observe that the hedging costs for buffer EPSs are higher than for floor EPSs. In order to explain this feature, we focus on their respective structures and we note that the fee legs for the buffer EPS and floor EPS shown in Table \ref{table:EPS} are identical but their protection legs differ. From the profile of cash flow function for a buffer EPS (Figure \ref{buffer EPS}) and a floor EPS (Figure \ref{floor EPS}), we can see that in a floor EPS the protection payment for large losses is capped, while it is unlimited in a buffer EPS. Hence the protection provided by a buffer EPS is more effective and, consequently, the hedging cost (i.e., the fair premium) for a buffer EPS is typically higher than for a floor EPS with identical fee leg.

\subsection{Numerical Study of a Fair EPS} \label{sec4.2}

In order to make EPS products more attractive to holders of superannuation accounts, we propose to set the fair premium to zero so that the buyer of an EPS is not required to pay an initial premium to the provider. Furthermore, since investors usually pay more attention to limit their losses, rather than to maximise their gains, we will set the protection rate together with the null fair premium, and then attempt to find a proper fee rate $f_2$. In Table \ref{table:EPS_zeroc}, we give the fee rate $f_2$ when all other parameters are fixed and the fair premium is null. We continue studying some examples of EPSs from the Table \ref{table:EPS_zeroc}.

\begin{table}[h!]
\centering
\caption{Buffer and floor EPSs with null fair premium}
\begin{tabular}{ccccccccc}
\hline
No. &EPS class &$T$&$l_1$&$g_1$&$p_1$&$p_2$&$f_1$&$f_2$ \\
\hline
[1]&Buffer & 1 year & -5\% & 5\% & 0 & 0.5 & 0 & 0.38 \\

[2]&Buffer & 1 year & -5\% & 5\% & 0 & 0.8 & 0 & 0.61 \\

[3]&Buffer & 1 year & -5\% & 10\% & 0 & 0.5 & 0 & 0.52 \\

[4]&Buffer & 1 year & -5\% & 10\% & 0 & 0.6 & 0 & 0.62 \\

[5]&Buffer & 1 year & -5\% & 10\% & 0 & 0.7 & 0 & 0.72 \\

[6]&Buffer & 1 year & -5\% & 10\% & 0 & 0.8 & 0 & 0.82 \\

[7]&Buffer & 1 year & -5\% & 10\% & 0 & 0.9 & 0 & 0.93 \\

[8]&Buffer & 1 year & -10\% & 10\% & 0 & 0.5 & 0 & 0.33 \\

[10]&Buffer & 1 year & -10\% & 10\% & 0 & 0.8 & 0 & 0.53 \\

[11]&Buffer & 2 year & -5\% & 5\% & 0 & 0.8 & 0 & 0.56 \\

[12]&Buffer & 2 year & -5\% & 10\% & 0 & 0.8 & 0 & 0.69 \\

[13]&Buffer & 2 year & -10\% & 10\% & 0 & 0.8 & 0 & 0.50 \\

\hline
[1]&Floor & 1 year & -5\% & 5\% & 0.5 & 0 & 0 & 0.17 \\

[2]&Floor & 1 year & -5\% & 5\% & 0.8 & 0 & 0 & 0.28 \\

[3]&Floor & 1 year & -5\% & 10\% & 0.5 & 0 & 0 & 0.24 \\

[4]&Floor & 1 year & -5\% & 10\% & 0.8 & 0 & 0 & 0.38 \\

[5]&Floor & 1 year & -10\% & 10\% & 0.5 & 0 & 0 & 0.42 \\

[6]&Floor & 1 year & -10\% & 10\% & 0.8 & 0 & 0 & 0.67\\

[7]&Floor & 1 year & -15\% & 10\% & 0.5 & 0 & 0 & 0.56 \\

[8]&Floor & 1 year & -15\% & 10\% & 0.6 & 0 & 0 & 0.67 \\

[9]&Floor & 1 year & -15\% & 10\% & 0.7 & 0 & 0 & 0.78 \\

[10]&Floor & 1 year & -15\% & 10\% & 0.8 & 0 & 0 & 0.89 \\

[11]&Floor & 1 year & -15\% & 10\% & 0.9 & 0 & 0 & 1.0 \\

[12]&Floor & 2 year & -5\% & 5\% & 0.8 & 0 & 0 & 0.18 \\

[13]&Floor & 2 year & -5\% & 10\% & 0.8 & 0 & 0 & 0.22 \\

[14]&Floor & 2 year & -10\% & 10\% & 0.8 & 0 & 0 & 0.40 \\

[15]&Floor & 2 year & -15\% & 10\% & 0.8 & 0 & 0 & 0.55 \\

\hline
\end{tabular}
\label{table:EPS_zeroc}
\end{table}

\begin{example}
\label{bufex1} {\rm
We take the buffer EPS No. 6 from Table \ref{table:EPS_zeroc} as an example. Assume that a holder uses \$1 mm as the nominal principal, the EPS product has the buffer threshold $l_1=-5\%$ with the protection rate $p_2=80\%$, the cap threshold $g_1=10\%$, and maturity one year. The fair premium for this product is null, which means that there no cost for EPS buyer at the inception of the EPS, if the fee rate equals $f_2=82\%$.}
\begin{itemize}{\rm
\item If the realised rate of return $R_T$ is above $10\%$, say $12\%$, then the EPS buyer pays $(12\% - 10\%) \times 82\% \times  \$1\ \mbox{\rm mm}  = \$16.4$ k
\item If $R_T$ is between $-5\%$ and $10\%$, there is no cash flow between the buyer and provider.
\item If $R_T$ is below $-5\%$, say $-8\%$, then the buyer receives $(-8\% + 5\%) \times 80\% \times  \$1\ \mbox{\rm mm}  = \$24$ k.
}\end{itemize}
\end{example}

If a high loss arises due to a negative rate of realised return $R_T$, the provider offers the same level of protection to a holder in both Examples \ref{buffer_ex} and \ref{bufex1}. In contrast, the EPS buyer is required to share more gains from positive returns with the provider in fair EPS. However, we contend that a typical holder would pay more attention to the protection level from potential losses than the retention rate of potential gains and hence a fair EPS would be preferred by most holders. The the protection rate is the same in both EPSs but the EPS from Example \ref{bufex1} has null fair premium and hence the latter EPS is likely to have a greater appeal to investors.

\begin{example} \label{floorex2} {\rm
We now examine the floor EPS No. 10 from Table \ref{table:EPS_zeroc} and we assume that the nominal principal equals \$1 mm. The EPS has the buffer threshold $l_1=-15\%$ with the protection rate $p_1=80\%$, the cap threshold $g_1=10\%$ with the fee rate $f_2=89\%$, and one year maturity. Then the holder does not need to pay an initial premium to the provider, since the fair premium is null.}
\begin{itemize} {\rm
\item If the realised rate of return $R_T$ is above $10\%$, say $12\%$, then the buyer pays $(12\% - 10\%) \times 89\% \times
 \$1\ \mbox{\rm mm}  = \$19.8$ k to the EPS provider.
\item If $R_T$ is between $0\%$ and $10\%$, there is no cash flow between the buyer and provider.
\item If $R_T$ is between $-15\%$ and $0\%$, such that $-8\%$, then the provider pays $8\% \times 80\% \times  \$1\ \mbox{\rm mm}  = \$64$ k
to the buyer.
\item If $R_T$ is below $-15\%$, say $-20\%$, then the provider pays $15\% \times 80\% \times  \$1\ \mbox{\rm mm}  = \$120$ k
 to the buyer. This is the fixed amount that the buyer receives when $R_T$ is less than $-15\%$.
} \end{itemize}
\end{example}

From Table \ref{table:EPS_zeroc}, we can conclude that if the protection rate ($p_2$ for the buffer EPS and $p_1$ for the floor EPS) increases, then the proper level of fee rate $f_2$ increases as well. Rows 3-7 for the buffer EPS in the top half of Table \ref{table:EPS_zeroc} and rows 7-11 in the bottom half of Table \ref{table:EPS_zeroc} justify this observation. For the reader's convenience, these EPSs are also presented in Table \ref{subtable1:EPS_zeroc}.

\begin{table}[h!]
\centering
\caption{Buffer and floor EPSs with null fair premium}
\begin{tabular}{ccccccccc}
\hline
No. &EPS class &$T$&$l_1$&$g_1$&$p_1$&$p_2$&$f_1$&$f_2$ \\
\hline
[3]&Buffer & 1 year & -5\% & 10\% & 0 & 0.5 & 0 & 0.52 \\

[4]&Buffer & 1 year & -5\% & 10\% & 0 & 0.6 & 0 & 0.62 \\

[5]&Buffer & 1 year & -5\% & 10\% & 0 & 0.7 & 0 & 0.72 \\

[6]&Buffer & 1 year & -5\% & 10\% & 0 & 0.8 & 0 & 0.82 \\

[7]&Buffer & 1 year & -5\% & 10\% & 0 & 0.9 & 0 & 0.93 \\

\hline
[7]&Floor & 1 year & -15\% & 10\% & 0.5 & 0 & 0 & 0.56 \\

[8]&Floor & 1 year & -15\% & 10\% & 0.6 & 0 & 0 & 0.67 \\

[9]&Floor & 1 year & -15\% & 10\% & 0.7 & 0 & 0 & 0.78 \\

[10]&Floor & 1 year & -15\% & 10\% & 0.8 & 0 & 0 & 0.89 \\

[11]&Floor & 1 year & -15\% & 10\% & 0.9 & 0 & 0 & 1.0 \\

\hline
\end{tabular}
\label{subtable1:EPS_zeroc}
\end{table}

\newpage

Similarly, if the protection leg parameter $l_1$ for the buffer EPS drops, which means the starting point for a loss protection is lowered, then the fee rate $f_2$ for the EPS provider decreases, with all else equal. For the floor EPS, if $l_1$ decreases, then the protection interval for losses becomes wider. We conclude that the fee rate $f_2$ for an EPS with null fair premium is increasing when the protection leg level $l_1$ decreases (this observation is illustrated in Table \ref{subtable2:EPS_zeroc}).
By combining results in Tables \ref{subtable1:EPS_zeroc} and \ref{subtable2:EPS_zeroc}, we can draw the following conclusion: if the protection strength for losses raises, then the fee rate for all EPSs should increase when the fair premium is postulated
to vanish.

\begin{table}[h!]
\centering
\caption{Buffer and floor EPSs with null fair premium}
\begin{tabular}{ccccccccc}
\hline
No. &EPS class &$T$&$l_1$&$g_1$&$p_1$&$p_2$&$f_1$&$f_2$ \\
\hline
[6]&Buffer & 1 year & -5\% & 10\% & 0 & 0.8 & 0 & 0.82 \\

[10]&Buffer & 1 year & -10\% & 10\% & 0 & 0.8 & 0 & 0.53 \\

\hline
[12]&Buffer & 2 year & -5\% & 10\% & 0 & 0.8 & 0 & 0.69 \\

[13]&Buffer & 2 year & -10\% & 10\% & 0 & 0.8 & 0 & 0.50 \\

\hline
[4]&Floor & 1 year & -5\% & 10\% & 0.8 & 0 & 0 & 0.38 \\

[6]&Floor & 1 year & -10\% & 10\% & 0.8 & 0 & 0 & 0.67\\

[10]&Floor & 1 year & -15\% & 10\% & 0.8 & 0 & 0 & 0.89 \\

\hline
[13]&Floor & 2 year & -5\% & 10\% & 0.8 & 0 & 0 & 0.22 \\

[14]&Floor & 2 year & -10\% & 10\% & 0.8 & 0 & 0 & 0.40 \\

[15]&Floor & 2 year & -15\% & 10\% & 0.8 & 0 & 0 & 0.55 \\

\hline
\end{tabular}
\label{subtable2:EPS_zeroc}
\end{table}

Lastly, we briefly discuss the influence of different values of the fee leg parameter $g_1$. If $g_1$ increases, the starting point of rate of return for which the EPS provider can participate in realised gains is higher. Hence if $g_1$ increases, then the fee rate $f_2$ should increase as well for both the buffer and floor EPS in order to make the fair premium vanish. Rows 2, 6, 11, 12 for the buffer EPS in the top half and rows 2, 4, 12, 13 for the floor EPS in the bottom half of Table \ref{subtable3:EPS_zeroc} are consistent with that finding.

\begin{table}[h!]
\centering
\caption{Buffer and floor EPSs with null fair premium}
\begin{tabular}{ccccccccc}
\hline
No. &EPS class &$T$&$l_1$&$g_1$&$p_1$&$p_2$&$f_1$&$f_2$ \\
\hline
[2]&Buffer & 1 year & -5\% & 5\% & 0 & 0.8 & 0 & 0.61 \\

[6]&Buffer & 1 year & -5\% & 10\% & 0 & 0.8 & 0 & 0.82 \\

\hline
[11]&Buffer & 2 year & -5\% & 5\% & 0 & 0.8 & 0 & 0.56 \\

[12]&Buffer & 2 year & -5\% & 10\% & 0 & 0.8 & 0 & 0.69 \\

\hline
[2]&Floor & 1 year & -5\% & 5\% & 0.8 & 0 & 0 & 0.28 \\

[4]&Floor & 1 year & -5\% & 10\% & 0.8 & 0 & 0 & 0.38 \\

\hline
[12]&Floor & 2 year & -5\% & 5\% & 0.8 & 0 & 0 & 0.18 \\

[13]&Floor & 2 year & -5\% & 10\% & 0.8 & 0 & 0 & 0.22 \\

\hline
\end{tabular}
\label{subtable3:EPS_zeroc}
\end{table}

\subsection{Forward Performance of an EPS}  \label{sec4.3}

We now take the view of the holder and we examine the theoretical forward performance of an EPS during market downturns.
Consistently with the primary purpose of an EPS, its buyer is assumed to be a holder of a superannuation account and thus we assume that she will receive the return $R_T$ on a reference portfolio at maturity date $T$. Recall that the buyer an EPS is required to deliver to its provider the adjusted return $\pp(R_T)$ on the reference portfolio where $\pp(R_T)$ is nonnegative (resp. nonpositive) if the realised return $R_T$ is nonnegative (resp. nonpositive). Therefore, in a study of performance of an EPS we will focus on the realised \textit{net return} for the holder of an superannuation account supplemented by an EPS. Formally, the net return is equal to the difference $R_T - \pp(R_T)$ where $R_T$ is the original return and $\pp(R_T)$ is the adjusted return. In the following example, we consider an investor holding a fair one-year index EPS for five consecutive years, meaning that he/she immediately enters into a new one-year contract upon termination of the previously held one-year index EPS. The simulated performance of the index in the Black-Scholes model and the net returns for the holder of an EPS are presented in Figure~\ref{sim}.

\newpage

We acknowledge that the Black-Scholes model does not offer the most reliable description of market crises but we contend that in a jump-diffusion model the performance of an EPS would be even more visible. To explicitly illustrate the latter conjecture by means of an example, we artificially introduce two negative jumps at deterministic dates in weeks 80 and 250. As before, we take the interest rate $r=1.5\%$, the dividend yield $\kappa=0$, the volatility parameter $\sigma=20\%$, and the initial value $S_0=100$. We perform 200 simulations of sample paths of asset returns over 5 years with the dynamics of the asset price given by \eqref{BSM}. The original and net returns are plotted with the $x$-axis representing the number of trading weeks. Three typical sample paths of returns are shown in black.

In addition, to account for the possibility of market crises, we introduce two down-ticks in weeks 80 and 250, which are indicated by red dashed lines. We then apply five back-to-back one-year fair buffer EPSs with parameters $p_1=0.8,\,l_1=-10\%,\,f_2=0.53$ and $g_1=10\%$. The distributions of the asset's original returns (green line) and net returns for a buffer EPS (blue line) are plotted year by year. Furthermore, we mark the extreme and mean values in the distributions of original and net returns. To enhance clarity, we only show the statistics for the first and fifth years.

\begin{figure} [h!]
    \centering
    \includegraphics[width=16cm, height=8.5cm]{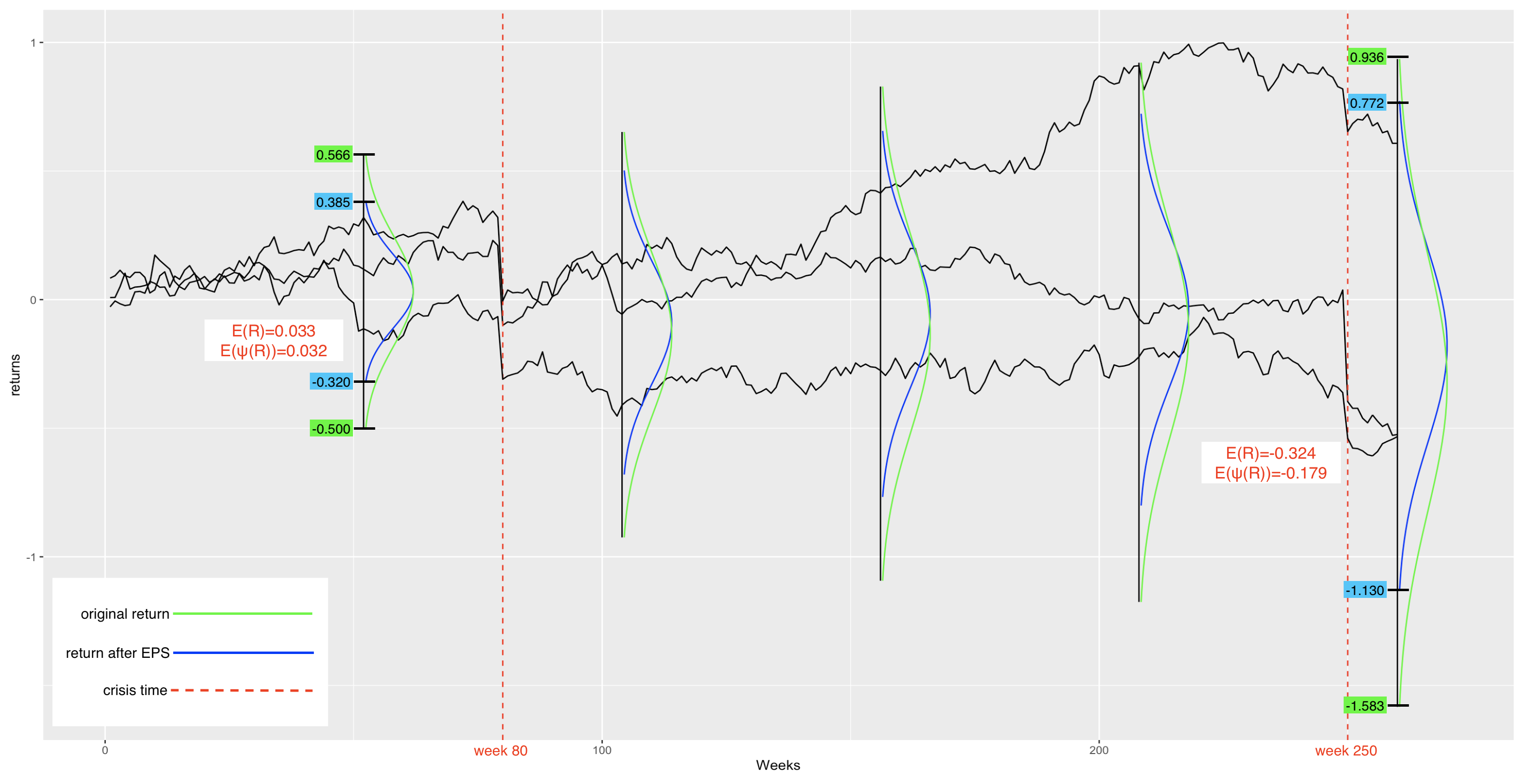}
    \caption{Simulation of original and net returns with a buffer EPS}
    \label{sim}
\end{figure}

In Figure \ref{sim}, we make the first key observation that the distribution curve of net returns is still reminding a normal curve when the buffer EPS is used. It is notable that the expected net returns are visibly higher while the spread narrows toward the mean. This indicates that the net returns have a smaller standard deviation compared to original returns, resulting in a more concentrated distribution with a higher peak. If no unpredictable events happen in the first year, holders of the EPS are protected from a loss less than $l_1$, and, as a trade-off, she/he needs to share any benefits greater than the threshold $g_1$.

Furthermore, the impact of the EPS on benefits sharing and loss protection is exactly the same in relation to the market factors and the participation rates $p_1$ and $f_2$. This is because the considered EPS is fair to both parties. The extreme positive values of the original and net returns with the buffer EPS at the end of the first year are $56.6\%$ and $38.5\%$, respectively, while the extreme negative values are $-50.0\%$ and $-32.0\%$, respectively. Additionally, for risk-averse investors, the buffer EPS can offer even more protection against losses if an insurance premium is paid to the provider. At the end of the fifth year, after experiencing two unpredictable market crises, the expected value of original returns drops from $0.033\%$ to $-0.324\%$ which represents the worse case scenarios for investors in real life. However, with the help of five one-year buffer EPSs, the mean of net returns was lifted to $-17.4\%$, which is much higher than the mean of original returns at the end of the fifth year ($-32.4\%$). The entire net returns curve is shifted upward towards a more concentrated distribution, which results in the higher mean and significantly elevated negative returns.

On the gains side, the extreme positive value of net returns decreased only from $93.6\%$ (original returns) to $77.2\%$ (net returns), which is a difference of $16.4\%$. In contrast, compared to the extreme negative value of the original return ( $-158.3\%$), the net returns rose to $-113\%$, an increase of $45.3\%$, nearly three times the difference of the extreme positive returns. The occurrence of market downturns accentuates the prominence of the protection leg over the fee leg, leading to asymmetric contraction levels in the two legs. These results underscore the role of EPSs as an investment insurance, demonstrating their enhanced performance in periods of increased market risks, particularly in the presence of market crises.

\section{Backtesting of Performance of Buffer and Floor EPSs} \label{sec5}

In our numerical analysis of some classes of equity protection swaps, we computed the theoretical fair premia for several EPSs with predetermined participation rates, as outlined in Section \ref{sec4.1}. Subsequently, we have examined the level of the fee rate for which the fair premium $\whc=\whH_0$ vanishes.  In the final step, we analyse the impact of an EPS on holder's investment through backtesting using the real-world data for two major stock indices: S\&P~500 and S\&P/ASX~200 in the U.S. and Australia, respectively. It should be acknowledged that we do not consider the currency risk and thus backtesting is done independently for the domestic and foreign investments, denominated in AUD and USD, respectively.
It is important to stress that throughout this section, all of our discussions and pricing results are model-free, in the sense that we do not make any assumptions about the valuation models or the future dynamics of indices.

We argued that, unlike other insurance products for variable annuities, the fair pricing of basic EPS products can be considered to be model-free since their valuation can be done directly in reference to the market data for traded options without using any stochastic model for future dynamics of the underlying index. Specifically, the fair premium for an EPS can be calculated using directly the market prices of the European call and put options with weights given by a static hedging strategy. Needless to say, since the fair premia (or, equivalently, the fair fee rates) are implicitly given by the real market data, they necessarily vary when market conditions change.
In our empirical study of the real-world benefits of an EPS for its holder, we will focus on the following six specifications for buffer and floor EPSs:
\begin{itemize}
    \item \textbf{Buffer 1}: $p_2=0.5,\ f_2=0.63,\ l_1=-5\%,\ g_1=5\%$;
    \item \textbf{Buffer 2}: $p_2=0.7,\ f_2=1.51,\  l_1=-5\%,\ g_1=10\%$;
    \item \textbf{Buffer 3}: $p_2=0.7,\ f_2=1.21,\ l_1=-10\%,\ g_1=10\%$;
    \item \textbf{Floor 1}:  $p_1=0.5,\ f_2=0.52,\ l_1=-10\%,\ g_1=10\%$;
    \item \textbf{Floor 2}:  $p_1=0.7,\ f_2=0.73,\ l_1=-10\%,\ g_1=10\%$;
    \item \textbf{Floor 3}:  $p_1=0.7,\ f_2=0.98,\ l_1=-15\%,\ g_1=10\%$.
\end{itemize}

These six products were postulated to be fair EPSs, meaning that their initial premia at inception should vanish. To this end, we first selected the protection rate ($p_1$ or $p_2$), as well as the protection and fee thresholds ($l_1$ and $g_1$) and then we used the market data for European options to identify a unique level of the fee rate $f_2$, which makes each EPS a fair swap. As was mentioned, the initial fair fee rate $f_2$ varies from day to day but here, for the sake of simplicity, we have only used one day market data for European options on S\&P~500 index to compute the fair fee rate $f_2$. The date (2022.02.02) was chosen as the middle date of the period from 3 May 2021 to 23 December 2022 when the market downturn
has occurred. The closing prices for in-the-money and at-the-money European put and call options on S\&P~500 index with a one-year maturity are reported in Table \ref{data_option}. We give there the quote and expiration dates, the strike price and the spot level of S\&P~500 index, and bid and ask prices for relevant options.  Notice that the values of the moneyness, here defined by convention as the ratio strike/spot are aligned with the value $1+l_1$ for put options and $1+g_1$ for call options (recall that $K^l_1=S_0(1+l_1)$ and $K^g_1=S_0(1+g_1)$).

\begin{table}[h!]
\centering
\caption{Market data for European options on S\&P~500 index}
\begin{tabular}{ccccccccc}
\hline
No. &Quote Date &Expiration &Strike &Type &Bid &Ask &Spot &Moneyness \\
\hline
[1]&2022/02/02 & 2023/02/17 & 3900 & Put & 184.6 & 187.8 & 4576.8 & 85.2\% \\

[2]&2022/02/02 & 2023/02/17 & 4125 & Put & 235.8 & 239.8 & 4576.8 & 90.1\% \\

[3]&2022/02/02 & 2023/02/17 & 4350 & Put & 298.3 & 302.7 & 4576.8 & 95.0\% \\

[4]&2022/02/02 & 2023/02/17 & 4575 & Put & 375.2 & 379.8 & 4576.8 & 100\% \\

[5]&2022/02/02 & 2023/02/17 & 4575 & Call & 366.1 & 370.4 & 4576.8 & 100\% \\

[6]&2022/02/02 & 2023/02/17 & 4800 & Call & 239.0 & 243.4 & 4576.8 & 104.9\% \\

[7]&2022/02/02 & 2023/02/17 & 5025 & Call & 139.0 & 142.8 & 4576.8 & 109.8\% \\

\hline
\end{tabular}
\label{data_option}
\end{table}

We use the same six EPS products in empirical studies for both S\&P~500 index and S\&P/ASX~200 index, although historical market data for the European option on S\&P/ASX~200 index were not available. Hence it was assumed that the market prices of European options for both indices will be close. In practice, our approach can be easily extended to the fully dynamic fair pricing of EPSs but we contend that the conclusions from our study would not change substantially.


\subsection{Backtesting During Market Downturn}  \label{sec5.1}

To assess the benefits from holding an EPS, we initially focus on scenarios in which the market experiences a downturn, which means that the underlying assets have negative returns. We assume that a cohort of investors engage on any given trading day in one-year EPS contracts of a given type and the EPSs commenced on any given day have the same combined notional principal, the size of which is in fact irrelevant for our conclusions.

Regarding the market data, we use the daily closing prices of S\&P~500 and S\&P/ASX~200 indices from 2 January 2020 to 23 December 2022 to determine the realised trailing returns on the index serving as a reference portfolio for an EPS under study. In order to analyse the impact of predominantly negative returns, we collected 164 one-year trailing returns of the S\&P~500 index and 165 one-year trailing returns of the S\&P/ASX~200 index from 3 May 2021 to 23 December 2022. The histograms in Figure \ref{hist_sp_asx_neg} present empirical frequencies of realised one-year trailing returns on S\&P~500 and S\&P/ASX~200 indices during the considered period.

\begin{figure}
  \centering
  \begin{tabular}[b]{c}
    \includegraphics[width=.43\linewidth]{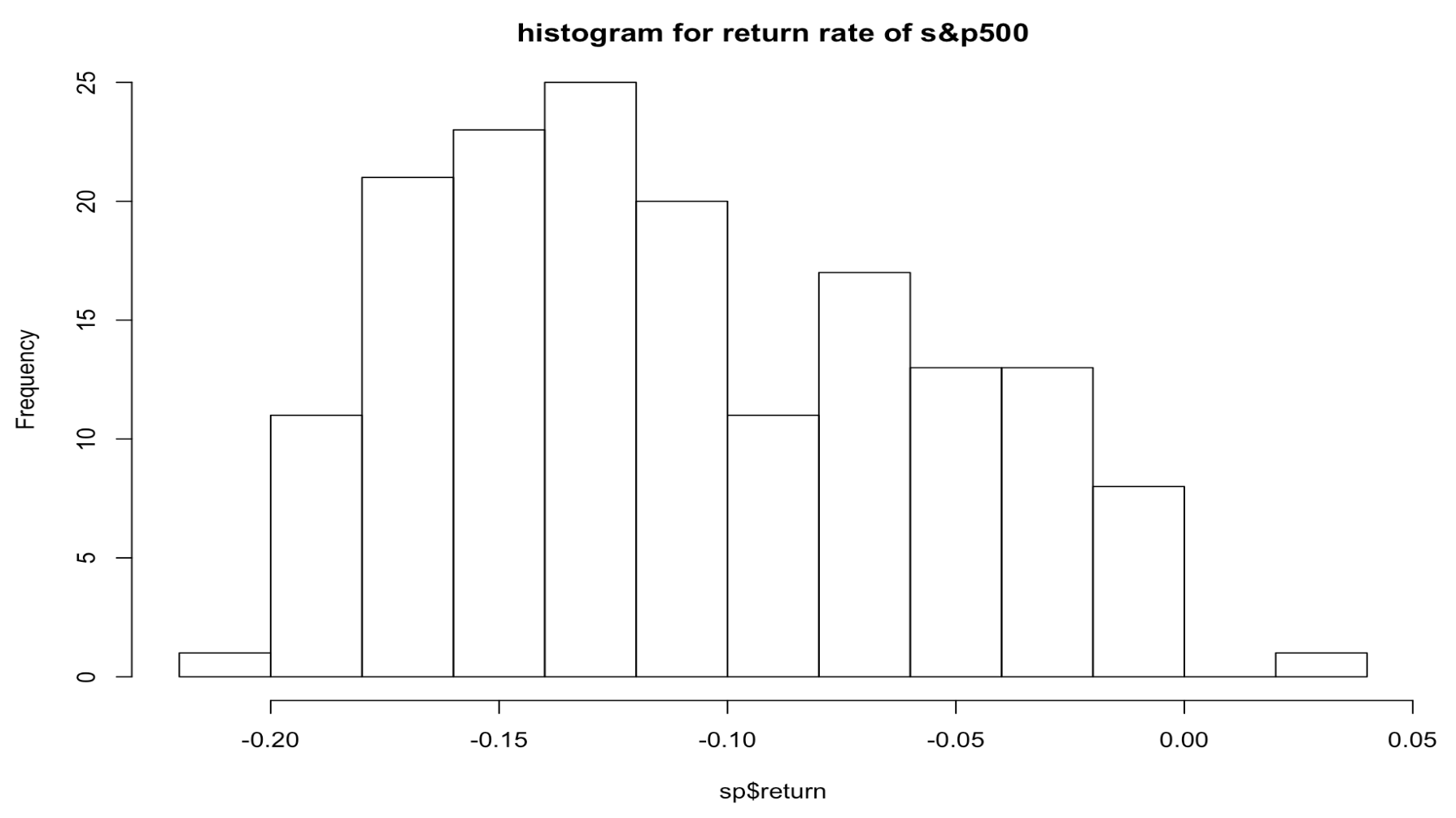} \\
    \small (a) S\&P~500 one year trailing return
  \end{tabular} \qquad
  \begin{tabular}[b]{c}
    \includegraphics[width=.43\linewidth]{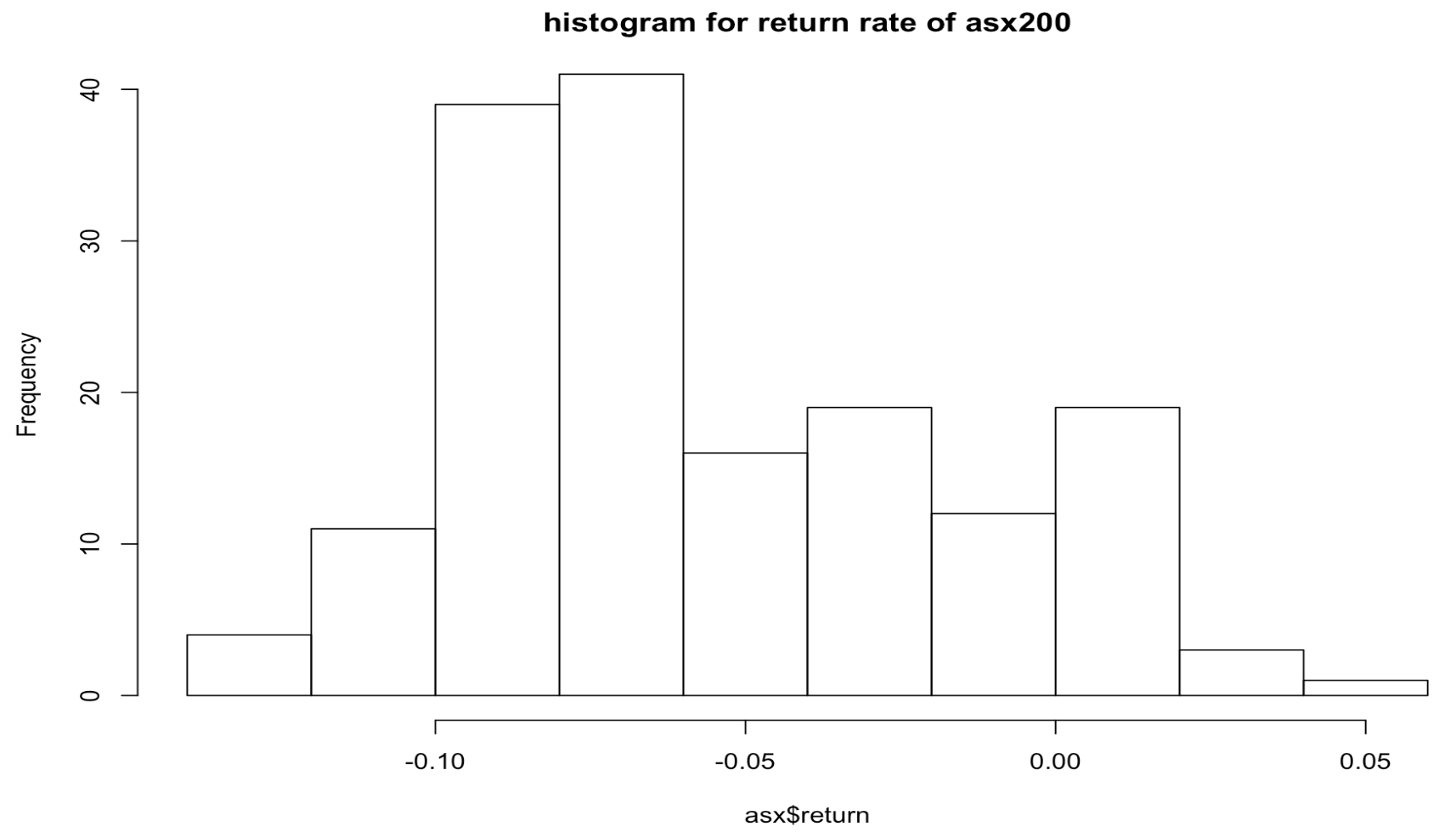} \\
    \small (b) S\&P/ASX~200 one year trailing return
  \end{tabular}
\caption{Histograms of trailing returns on S\&P~500 and S\&P/ASX~200 indices}\label{hist_sp_asx_neg}
\end{figure}

\begin{figure} [h!]
    \centering
    \includegraphics[width=14cm, height=8.5cm]{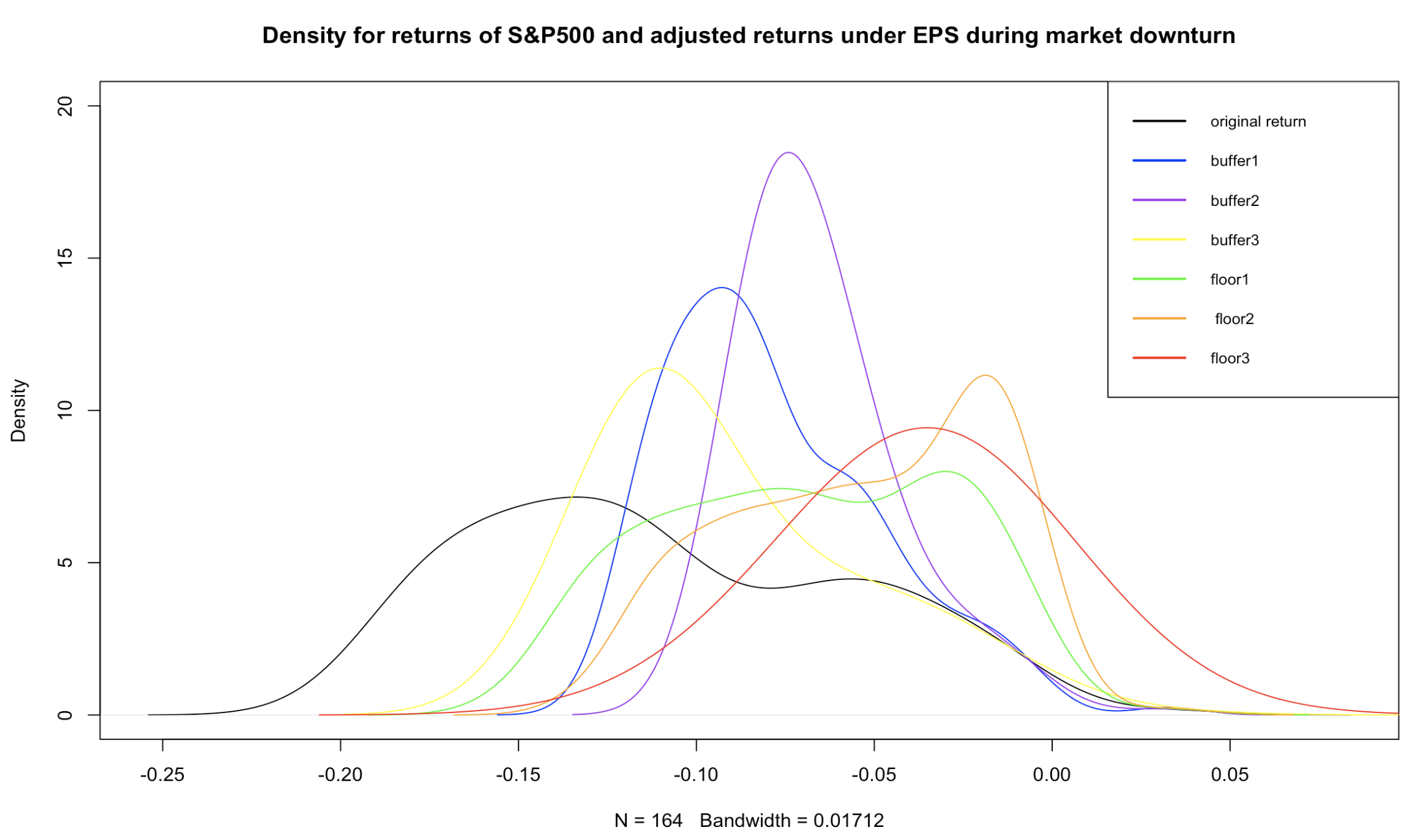}
    \caption{Empirical densities of original and net trailing returns on S\&P~500 under market downturn}
    \label{back_sp_neg}
\end{figure}

\begin{table}[h!]
\centering
\caption{Empirical quantiles of original and net trailing returns on S\&P~500 under market downturn}
\begin{tabular}{cccccccccc}
\hline
Case & Min & $5\%$ & $10\%$ & $25\%$ & $50\%$  & $75\%$ & $90\%$ & Max \\
\hline
Original & -0.2027 & -0.1822 & -0.1777 & -0.1543 & -0.1180 & -0.0649 & -0.0365 & 0.0325\\
\hline
Buffer1 & -0.1264 & -0.1161 & -0.1139 & -0.1021 & -0.0840 & -0.0575 & -0.0365 & 0.0325  \\
\hline
Buffer2 & -0.0958 & -0.0897 & -0.0883 & -0.0813 & -0.0704 & -0.0545 & -0.0365 & 0.0325 \\
\hline
Buffer3 & -0.1308 & -0.1247 & -0.1233 & -0.1163 & -0.1054 & -0.0649 & -0.0365 & 0.0325  \\
\hline
Floor1 & -0.1527 & -0.1322 & -0.1277 & -0.1043 & -0.0680 & -0.0325 & -0.0183 & 0.0325  \\
\hline
Floor2 & -0.1327 & -0.1122 & -0.1077 & -0.0843 & -0.0480 & -0.0195 & -0.0110 & 0.0325 \\
\hline
Floor3 & -0.0977 & -0.0772 & -0.0727 & -0.0493 & -0.0354 & -0.0195 & -0.0110 & 0.0325 \\
\hline
\end{tabular}
\label{quan_sp_neg}
\end{table}

From Figure \ref{hist_sp_asx_neg} it transpires that the realised one-year trailing returns in the considered period are characterised by a negative mean value. By combining either a buffer or a floor EPS with the original returns, we obtain the sample of net trailing returns on S\&P~500 index. To compare the original and net trailing returns, we display in Figure \ref{back_sp_neg} their respective empirical densities and in Table \ref{quan_sp_neg} we give the values of empirical quantiles.
Figure \ref{back_sp_neg} and Table \ref{quan_sp_neg} reveal that, on the negative side, the net trailing returns dominate the original trailing returns, have a less negative empirical mean, and exhibit an empirical distribution not far away from the normal distribution. It is worth noting that the buffer EPS No. 2 in Table \ref{quan_sp_neg} offers the best protection among three buffer contracts, which is due to the fact that it has a higher protection rate ($p_2=0.7$) than the buffer EPS No. 1 ($p_2=0.5$), and a higher protection threshold ($l_1=-5\%$) than the buffer EPS No. 3 ($l_1=-10\%$). Similarly, the floor EPS No. 3 offers the highest protection among three floor contracts, has a higher protection rate ($p_1=0.7$) than the floor No. 1 ($p_1=0.5$), and a lower floor ($l_1=-15\%$) than the floor EPS No. 2 ($l_1=-10\%$).
In Figure \ref{back_sp_neg}, we observe that floor EPSs shift negative original trailing returns to higher levels than buffer EPSs when considering the market during a downturn period. This feature was in fact expected, since a buffer EPS only protects against sizeable negative returns, while a floor EPS provides protection against any negative return.

We have also examined the impact of EPSs on the S\&P/ASX~200 index using 165 one-year realised trailing returns on S\&P/ASX~200. Figure \ref{back_asx_neg} displays the empirical densities of one-year trailing returns on the S\&P/ASX~200 index and net returns for various EPS products. Additionally, in Table \ref{quan_asx_neg} we report the empirical quantiles of S\&P/ASX~200 index returns for the period under study.

\begin{figure} [h!]
    \centering
    \includegraphics[width=14cm, height=8.5cm]{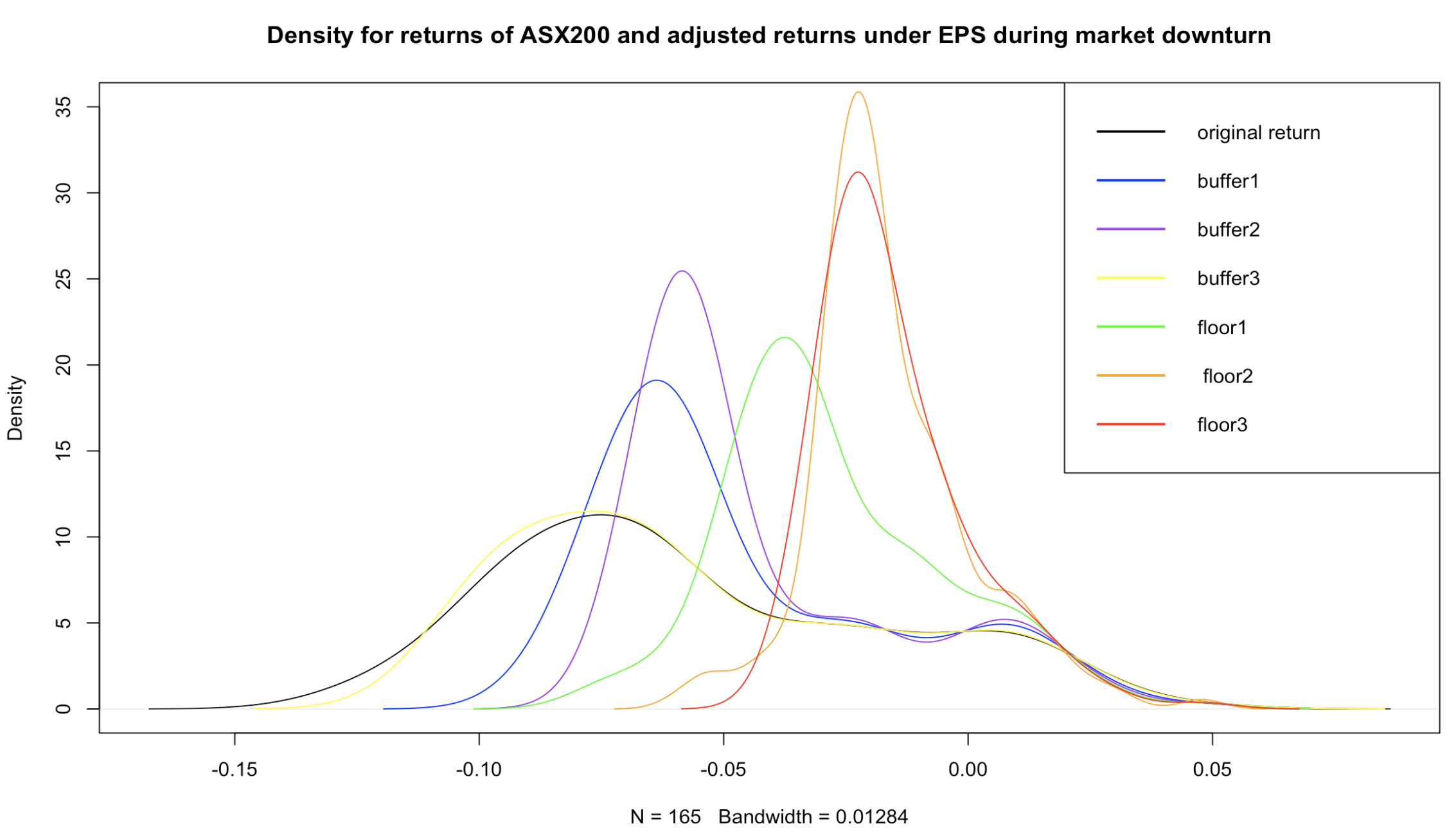}
    \caption{Empirical densities of original and net trailing returns on S\&P/ASX~200 under market downturn}
    \label{back_asx_neg}
\end{figure}

\begin{table}[h!]
\centering
\caption{Empirical quantiles of original and net trailing returns on S\&P/ASX~200 under market downturn}
\begin{tabular}{cccccccccc}
\hline
Case & Min & $5\%$ & $10\%$ & $25\%$ & $50\%$  & $75\%$ & $90\%$ & Max \\
\hline
Original & -0.1290 & -0.1092 & -0.0988 & -0.0858 & -0.0654 & -0.0248 & 0.0076 & 0.0478 \\
\hline
Buffer1 & -0.0895 & -0.0796 & -0.0744 & -0.0679 & -0.0577 & -0.0248 & 0.0076 & 0.0478 \\
\hline
Buffer2 & -0.0737 & -0.0678 & -0.0646 & -0.0607 & -0.0546 & -0.0248 & 0.0076 & 0.0478  \\
\hline
Buffer3 & -0.1087 & -0.1028 & -0.0988 & -0.0858 & -0.0654 & -0.0248 & 0.0076 & 0.0478  \\
\hline
Floor1 & -0.0790 & -0.0592 & -0.0494 & -0.0429 & -0.0327 & -0.0124 & 0.0076 & 0.0478  \\
\hline
Floor2 & -0.0590 & -0.0392 & -0.0297 & -0.0257 & -0.0196 & -0.0074 & 0.0076 & 0.0478  \\
\hline
Floor3 & -0.0387 & -0.0328 & -0.0296 & -0.0257 & -0.0196 & -0.0074 & 0.0076 & 0.0478  \\

\hline
\end{tabular}
\label{quan_asx_neg}
\end{table}

It is no surprising to see in Figure \ref{back_asx_neg} and Table \ref{quan_asx_neg} that the relationship between different EPS contracts is similar to that observed in Figure \ref{back_sp_neg} and Table \ref{quan_sp_neg}. We conclude that, regardless of the data used, the benefits of EPS contracts for portfolio's protection are consistent.

It can also be observed that a floor EPS has a stronger impact on S\&P/ASX~200 index returns yielding relatively more concentrated net returns, while a buffer EPS typically shows a more pronounced influence on S\&P~500 index returns. This phenomenon is due to the fact that a floor EPS protects against any losses of a limited size, whereas a buffer EPS can only protect against relatively large losses but is unlimited. A buffer EPS would be preferred by a holder characterised by a high level of risk aversion to large losses, whereas a floor EPS would be more appropriate for a holder expecting only relatively low losses during the holding period of an EPS. Finally, since original trailing returns on S\&P/ASX~200 index were broadly less negative compared to those on S\&P~500 index, a floor EPS would be more suitable when the S\&P/ASX~200 index is chosen as the reference portfolio.

\subsection{Backtesting During Market Downturn and Upturn} \label{sec5.2}

Next, we study the impact of EPS contracts on a general scenario where the real-world market has experienced both a slump and an upsurge. To this end, we analyse the daily closing prices of S\&P~500 and S\&P/ASX~200 between 2 January 2020 and 23 December 2022, resulting in 499 one-year trailing returns for the S\&P~500 index and 502 one-year trailing returns for the S\&P/ASX~200 index. Figure \ref{hist_sp_asx} shows the empirical histograms of one-year trailing returns on S\&P~500 and S\&P/ASX~200 indices. We observe that the average trailing return on the S\&P~500 index is higher than that on the S\&P/ASX~200 index and, furthermore, the realised trailing returns on S\&P~500 index are characterised by a higher volatility.

\begin{figure}
  \centering
  \begin{tabular}[b]{c}
    \includegraphics[width=.43\linewidth]{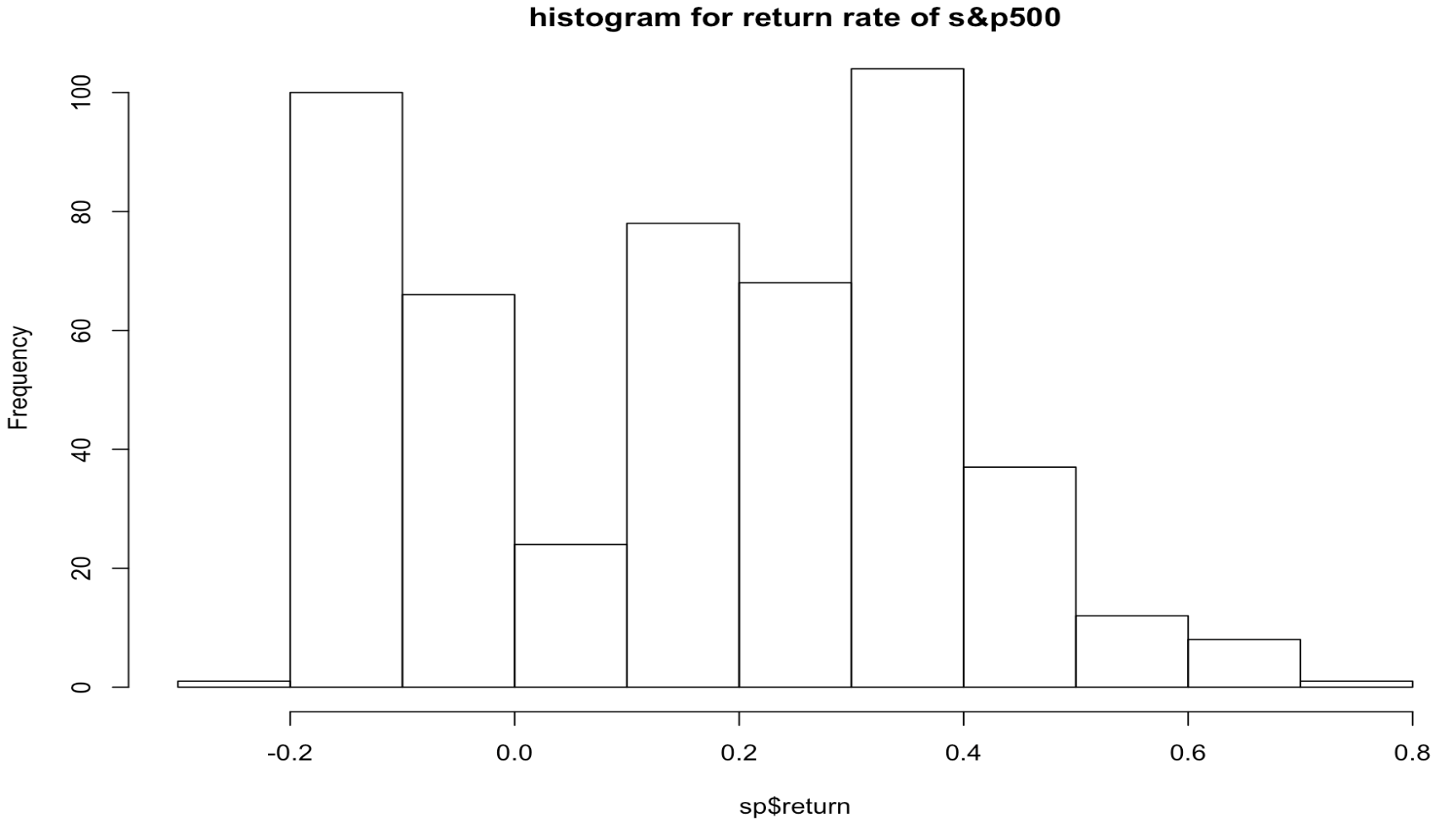} \\
    \small (a) S\&P~500 one year trailing return
  \end{tabular} \qquad
  \begin{tabular}[b]{c}
    \includegraphics[width=.43\linewidth]{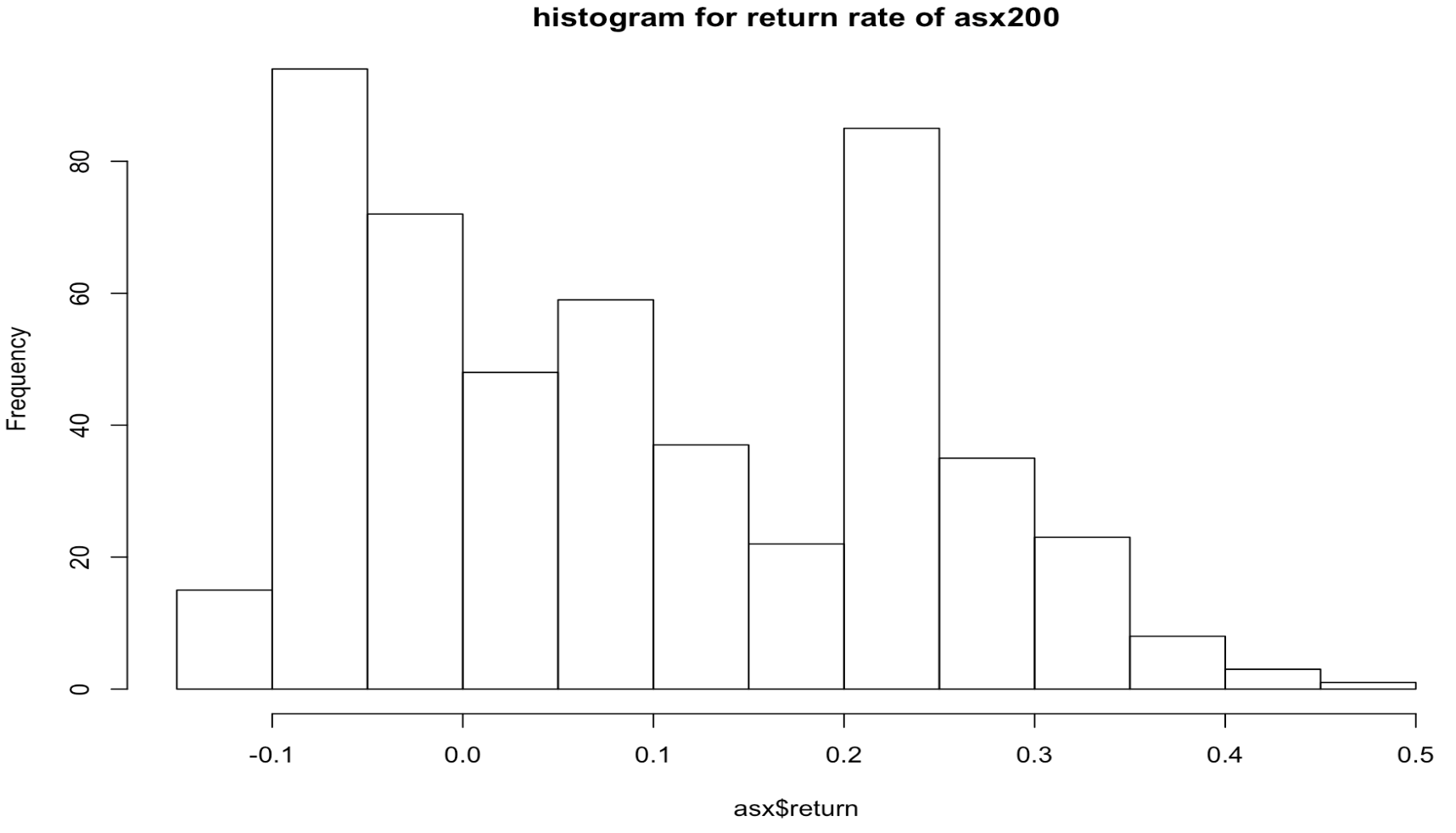} \\
    \small (b) S\&P/ASX~200 one year trailing return
  \end{tabular}
\caption{Histograms of trailing returns for S\&P~500 and S\&P/ASX~200 indices}\label{hist_sp_asx}
\end{figure}

To examine the effect of an EPS on both positive and negative returns, we continue to examine six EPSs (three buffers and three floors) introduced at the beginning of this section to compute the net returns for the holder of an EPS.  Figure \ref{back_sp_whole} illustrates the empirical densities for original and net returns on the S\&P~500 index and Table \ref{quan_sp_whole} gives the respective empirical quantiles.

\begin{figure} [h!]
    \centering
    \includegraphics[width=14cm, height=8.5cm]{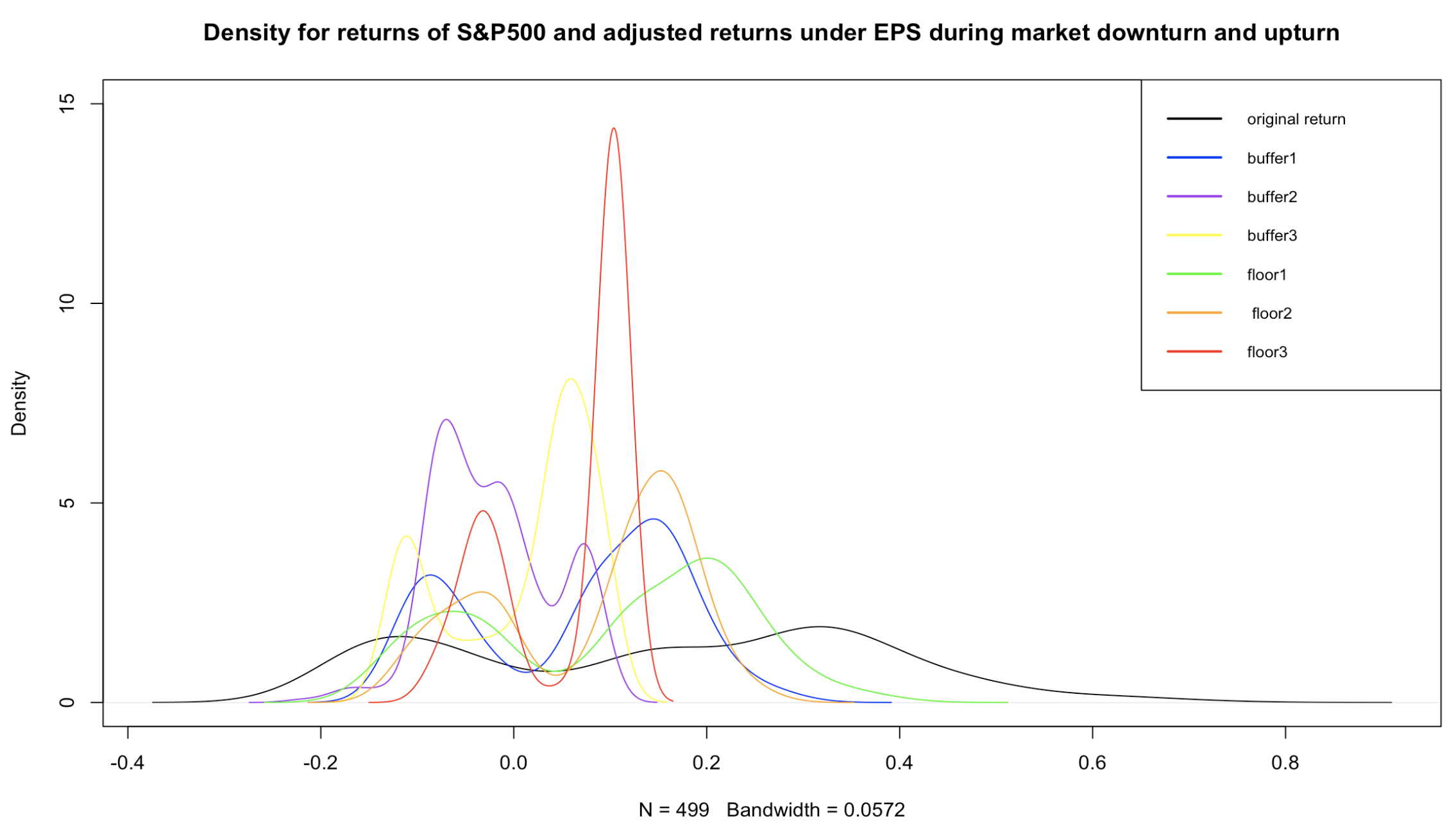}
    \caption{Empirical densities of trailing returns on S\&P~500 under market downturn and upturn}
    \label{back_sp_whole}
\end{figure}

\begin{table}[h!]
\centering
\caption{Empirical quantiles of original and net trailing returns on S\&P~500 under market downturn and upturn}
\begin{tabular}{cccccccccc}
\hline
Case & Min & $5\%$ & $10\%$ & $25\%$ & $50\%$  & $75\%$ & $90\%$ & Max\\
\hline
Original & -0.2027 & -0.1654 & -0.1444 & -0.0617 &  0.1641 & 0.3236 &  0.4172 & 0.7382  \\
\hline
Buffer1 & -0.1264 & -0.1077 & -0.0972 & -0.0559 & 0.0922 & 0.1512 &  0.1859 & 0.3046  \\
\hline
Buffer2 & -0.2255 & -0.0969 & -0.0880 & -0.0704 & -0.0269 &  0.0219 & 0.0726 & 0.0997 \\
\hline
Buffer3 & -0.1308 & -0.1196 & -0.1133 & -0.0617 & 0.0427 &0.0655  &0.0882 & 0.0997 \\
\hline
Floor1 & -0.1527 & -0.1154 & -0.0944 & -0.0309 & 0.1308 & 0.2073 & 0.2523 & 0.4063 \\
\hline
Floor2 & -0.1327 & -0.0954 & -0.0744 & -0.0185 & 0.1173 & 0.1604 & 0.1857 & 0.2723 \\
\hline
Floor3 & -0.0977 & -0.0604 & -0.0433 & -0.0185 & 0.1013 &0.1045 &  0.1063 & 0.1128 \\
\hline
\end{tabular}
\label{quan_sp_whole}
\end{table}

In addition, Figure \ref{back_asx_whole} presents the empirical densities for the original and net trailing returns on the S\&P/ASX~200 index, accompanied by their empirical quantiles given in Table \ref{quan_asx_whole}. Observe that the empirical densities of returns on the S\&P/ASX~200 index exhibit similar main characteristics as those obtained for the S\&P~500 index. While we previously examined the impact of EPSs on negative returns in the benefit testing for investors, we now focus on their influence on positive returns.

\begin{figure} [h!]
    \centering
    \includegraphics[width=14cm, height=8.5cm]{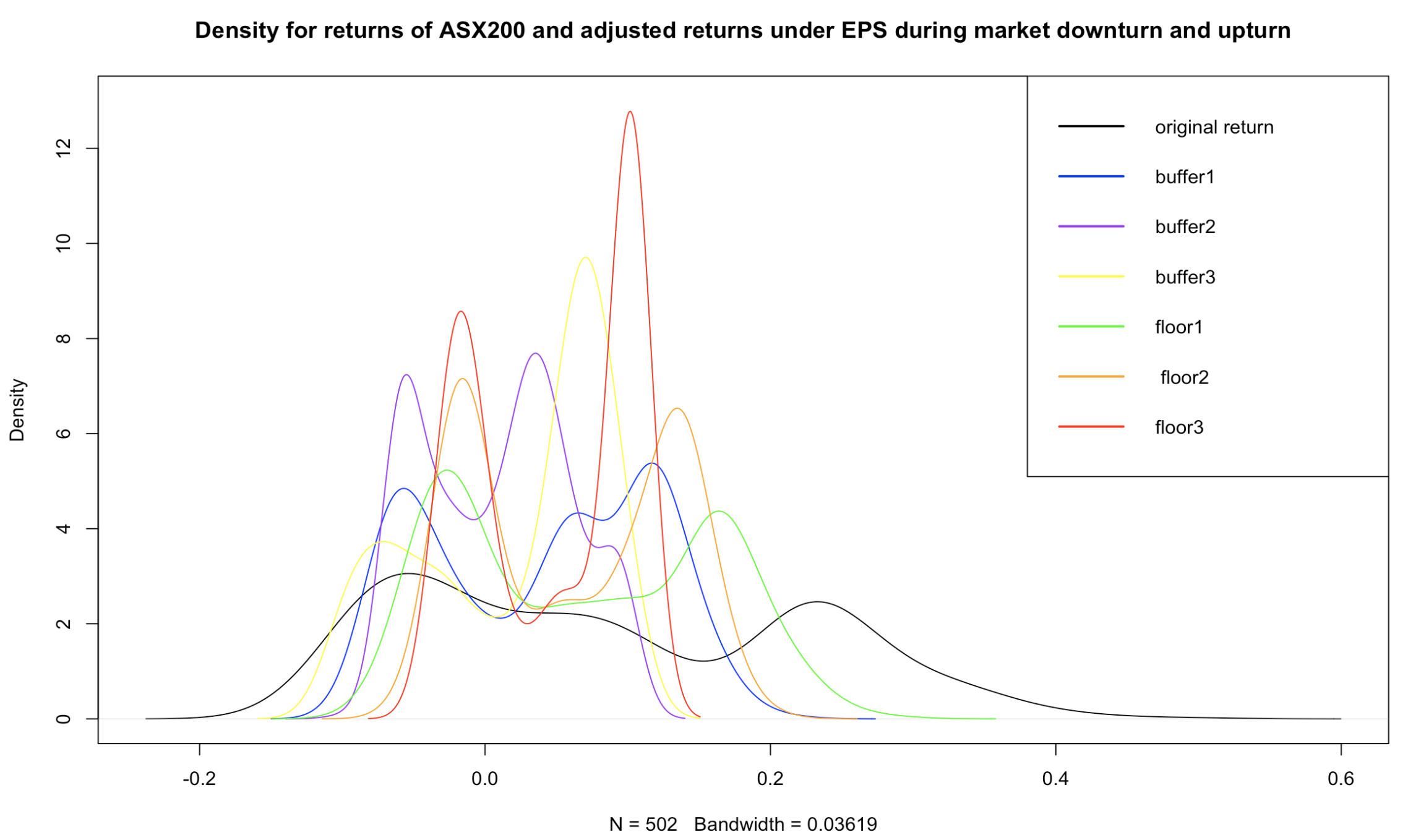}
    \caption{Empirical densities of trailing returns on S\&P/ASX~200 under market downturn and upturn}
    \label{back_asx_whole}
\end{figure}

\begin{table}[h!]
\centering
\caption{Empirical quantiles of original and net trailing returns on S\&P/ASX~200 during market downturn and upturn}
\begin{tabular}{cccccccccc}
\hline
Case & Min & $5\%$ & $10\%$ & $25\%$ & $50\%$  & $75\%$ & $90\%$ & Max \\
\hline
Original & -0.1290 & -0.0939 & -0.0804 & -0.0386 & 0.0630 & 0.2239 &  0.2690 & 0.4912  \\
\hline
Buffer1 & -0.0895 & -0.0719 & -0.0652 & -0.0386 & 0.0548 & 0.1144 & 0.1310 & 0.2132 \\
\hline
Buffer2 & -0.0995 & -0.0635 & -0.0593 & -0.0418 & 0.0165 &  0.0475& 0.0837 & 0.0999 \\
\hline
Buffer3 & -0.1087 & -0.0939 & -0.0804 & -0.0386 & 0.0509 & 0.0730 &  0.0915 & 0.0999  \\
\hline
Floor1 & -0.0790 & -0.0469 & -0.0402 & -0.0193 & 0.0630 & 0.1595 &  0.1811 & 0.2877 \\
\hline
Floor2 & -0.0590 & -0.0282 & -0.0241 & -0.0116 & 0.0630 &0.1335 & 0.1456 & 0.2056 \\
\hline
Floor3 & -0.0387 & -0.0282 & -0.0241 & -0.0116 & 0.0630 &0.1025 & 0.1034 & 0.1078 \\
\hline
\end{tabular}
\label{quan_asx_whole}
\end{table}

When comparing the empirical densities of original and net trailing returns on S\&P~500 and S\&P/ASX~200 indices, we can observe that the net returns are more tightly distributed and exhibit a smaller variance. Furthermore, the frequencies of positive and negative values for both the original and net trailing returns are similar, with positive (resp., negative) net returns being systematically lower (resp., higher) for the net returns. It is common in the economic literature to assume that investors are risk-averse and pay more attention to losses, rather than gains. Although the net trailing returns have compressed positive parts, there are ample reasons to contend that EPS products would prove advantageous for holders of superannuation accounts since their protection leg is an effective tool to mitigate portfolio's losses.

To summarise, the empirical densities of net returns for the holder of an EPS presented in Figures \ref{back_sp_whole} and \ref{back_asx_whole} exhibit sharper peaks than the original returns, which indicates that an EPS offers effective protection against substantial losses at a cost of an insurance fee in the form of a minor reduction of large gains. As expected, a holder of an EPS receives a higher return in the case of portfolio's loss combined with a lower return when gain occurs, which helps to mitigate the uncertainty in portfolio's future values.

\section{Concluding Remarks} \label{sec6}

This work examines the performance of a new class of financial derivatives, tentatively called \textit{equity protection swaps} (EPS), whose aim is to provide tailored-made protection for individual superannuation accounts. A generic EPS is a special kind of equity swap that combines the features of a total return swap with a customised contract that allows both the holder and provider to select levels of the protection and fee legs, as well as the corresponding participation rates.

An EPS is purposely designed to have a relatively simple structure in comparison to other existing guarantee products for variable annuities, while it still retains the important attribute of providing at least a partial guarantee for the value of a reference portfolio at no initial cost. Therefore, we contend that they are likely to have a huge potential for holders of superannuation accounts.
In essence, the provider of an EPS receives potential gains when the value of the reference portfolio increases between the contract's inception date and its maturity and offers partial protection to the holder when the value of the reference portfolio drops during that period. Since an EPS can be easily customized, it should be attractive to portfolio holders as a tool to mitigate losses at a desired level of protection based on their individual preferences.  

Additionally, an EPS offers a unique structure that allows for the fair premium to be easily calculated for varying maturities, protection thresholds and rates, as well as fee thresholds and rates. The existence of an explicit formula for the fair premium is beneficial for both the EPS provider and buyer. According to behavioral finance, investors tend be unwilling to pay a large amount upfront in exchange for insurance of a potential loss in the future. To address this issue, the provider has also the option to adjust the fee rate so that the fair initial premium is null. In that case the only cost for investors is a predetermined share of potential gains at maturity, which should make an EPS more attractive to them. The flexibility and simplicity of an EPS make it an ideal product for the pension market, which is characterised by unsophisticated investors
whose livelihoods depend on income stream drawn from their superannuation accounts.

From the provider's perspective, we argued that the use of European call and put options is an effective method for the static (and hence model-free) hedging strategy for an EPS. By using call options to hedge potential gains in the fee leg and put options to hedge potential losses in the protection leg, providers can create a well-structured static hedging portfolio. Two specific structures of an EPS, the buffer EPS and the floor EPS, were used throughout to explicitly illustrate the hedging portfolios but the method works for other cases as well. Of these two structures, the floor EPS was found to have a superior hedging structure, due to its cap on potential provider's losses in case of counterparty credit risk of issuers of put options. While the present work focuses on these two specific structures, other specification, such as
buffer-floor/buffer-cap EPS of Example \ref{buffer_ex}, will be explored in future research.

Another important advantage of an EPS is the fact that it can be offered by a third-party provider, as this completely eliminates the pressure on managing the underlying reference portfolio. However, in such a situation, it would be important to take into account the provider's default risk since it may have significant consequences for the buyers. In most cases, the holder of an EPS may not default even if there is a gain because their accounts are managed by superannuation funds. In contrast, the third-party providers may also default in some extreme events, which means they could not provide enough protection for the holder's losses. As such, it is recommended that superannuation funds offer EPS contracts to their members, as this reduces the default probability given their huge cash flows from variable annuities. To reduce the risk exposure in the event of a third-party default, a collateralised variant of an EPS should also be examined in future studies.

We acknowledge that we presented here only an introduction to these novel investment insurance products for Australian superannuation system. We have limited ourselves to a study of some simple structures of the standard EPS by focusing on real-life advantages of various classes of EPSs.
As explained in Subsection \ref{sec2.4}, the most basic product is an index EPS but, in practice, other classes of EPSs are likely to be of interest and it should be noted that not all EPSs can be fairly priced with no reference to some stochastic models for indices or exchange rates. One could also examine a more complex structure of an EPS, tentatively called a {\it two-phased EPS}, which offers a more comprehensive risk management solution for a market crash at predetermined postponed settlement date $T_p > T$, rather than the original maturity date $T$.  Due to the expected market recovery between the maturity date $T$ and the final settlement date $T_p$, the holder of a two-phased EPS would obtain partial protection against a market crash. The postponed settlement is important since it would give both the provider and holder more opportunities to mitigate their respective loss exposures during a market downturn. Obviously, the concept of two-phased EPS adds a new dimension to the analysis of an EPS, making it more challenging, but also more useful in addressing the financial risks faced by holders of superannuation accounts.

Another crucial area for future research is the consideration of jumps in the reference portfolio. The limitations of the Black-Scholes model in capturing the dynamics of real markets have been well documented. When using the Black-Scholes model, the implied volatilities derived from market data can be used to construct more accurate hedging portfolios. Moreover, jump models can be employed to modify the Black-Scholes model in order to better represent real-world financial crises. By incorporating jumps, a theoretical study can be made more realistic and consistent with real-world scenarios. This will allow to analyse the impact of sudden losses on the performance of EPSs and the evaluation of the efficiency of different hedging strategies in protecting investors against dramatic losses. By providing valuable insights for both providers and holders, a theoretical study of a model with jumps would help to improve the design and usage of EPSs.

Finally, an important direction for future research is the case of a cross-currency reference portfolio. In the context of Australian super funds, it is fairly common for members to hold assets across several economies. Hence to generalise the application scope of an EPS, the pricing and hedging strategies should take into account the exchange rate fluctuations and the correlation between prices of domestic and foreign assets. Various existing approaches to the arbitrage-free pricing and hedging of cross-currency reference portfolios include to either consider two independent local returns on domestic and foreign assets or focus on the total return on the portfolio's value expressed in the domestic currency. This area of research is essential for providers and investors operating in international markets, as it will provide a valuable insight into fair pricing and hedging strategies for an EPS in a global context.

\vskip 10 pt \noindent  {\large \bf Acknowledgments} \vskip 5 pt
The research of H. Xu, R. Liu, and M. Rutkowski was supported by the Australian Research Council Discovery Project scheme
under grant DP200101550 {\it Fair pricing of superannuation guaranteed benefits with downturn risk}.

\end{document}